\newcommand{\new}[1]{#1}
\newcommand*{\hermconj}{^{\mathsf{H}}}
\renewcommand{\top}{{\mathsf{T}}}
\def\RR{{\mathbb R}}
\def\EE{{\mathbb E}}
\def\CC{\mathbb C}
\DeclareMathAlphabet{\mathpzc}{OT1}{pzc}{m}{it}
\def\bA{{\mathbf A}}
\def\bb{{\mathbf b}}
\def\bv{{\mathbf v}}
\def\b1{\boldsymbol{1}}
\def\b{{\underline{b}}}
\def\v{{\underline{v}}}
\newcommand{\diag}[1]{\mbox{diag}(#1)}
\newcommand{\ve}[1]{ {\mathbf{#1}} }
\newcommand{\ul}[1]{ {\underline{#1}} }
\def\defequal{\stackrel{\mbox{\footnotesize def}}{=}}
\newcommand{\bean}{\begin{eqnarray*}}
\newcommand{\eean}{\end{eqnarray*}}
\newcommand{\W}{\ve{W}}
\renewcommand{\H}{\ve{H}}
\def\bal#1{\begin{align}#1\end{align}}
\def\aa{\text{a}}
\def\bb{\text{b}}
\newcommand{\balpha}{\boldsymbol{\alpha}}
\newcommand{\bualpha}{\underline{\boldsymbol{\alpha}}}
\newcommand{\bphi}{\boldsymbol{\phi}}
\newcommand{\bPhi}{\boldsymbol{\Phi}}
\newcommand{\buPhi}{\underline{\boldsymbol{\Phi}}}
\newcommand{\bSigma}{\boldsymbol{\Sigma}}
\newcommand{\btheta}{\boldsymbol{\theta}}
\def\IS{\text{IS}}
\def\JL{\text{JL}}
\def\EE{\mathbb{E}}
\newtheorem{theorem}{Theorem}
\begin{document}

	\title{Estimation with Low-Rank Time-Frequency Synthesis Models}	
	
	\author{C\'edric~F\'evotte,~\IEEEmembership{Senior Member,~IEEE,}
	and~Matthieu~Kowalski
	\thanks{C.~F\'evotte is with IRIT, Universit\'e de Toulouse, CNRS, France.}
	\thanks{M.~Kowalski is with L2S, Universit\'e Paris-Sud, Gif-sur-Yvette, France.}
	}

	\maketitle

	\begin{abstract}
	
		Many state-of-the-art signal decomposition techniques rely on a low-rank factorization of a time-frequency (t-f) transform. In particular, nonnegative matrix factorization (NMF) of the spectrogram has been considered in many audio applications. This is an {\em analysis} approach in the sense that the factorization is applied to the squared magnitude of the analysis coefficients returned by the t-f transform. In this paper we instead propose a {\em synthesis} approach, where low-rankness is imposed to the synthesis coefficients of the data signal over a given t-f dictionary (such as a Gabor frame). As such we offer a novel modeling paradigm that bridges t-f synthesis modeling and traditional analysis-based NMF approaches. The proposed generative model allows in turn to design more sophisticated multi-layer representations that can efficiently capture diverse forms of structure. Additionally, the generative modeling allows to exploit t-f low-rankness for compressive sensing. We present efficient iterative shrinkage algorithms to perform estimation in the proposed models and illustrate the capabilities of the new modeling paradigm over audio signal processing examples.
	\end{abstract}

	\IEEEpeerreviewmaketitle

	\section{Introduction}

	\IEEEPARstart{M}{atrix}  factorization methods currently enjoy a large popularity in machine learning and signal processing. In signal processing, the input data is usually a time-frequency (t-f) transform of some original time series $x(t)$. For example, in the audio setting, nonnegative matrix factorization (NMF) is commonly used to decompose magnitude or power spectrograms into elementary components \cite{Smaragdis2014}; the spectrogram $\ve{P}$ is approximately factorized into $\W \H$, where $\W$ is the dictionary matrix collecting spectral patterns in its columns and $\H$ is the activation matrix. The approximate $\W \H$ is generally of lower rank than $\ve{P}$, unless additional constraints are imposed on the factors. NMF is at the core of \new{classical} source separation systems such as \cite{vir07,ieee_asl10}. 
		
	The spectrogram $\ve{P}$ is usually obtained from the short-time Fourier transform $\ve{Y}$. The coefficients $y_{fn}$ of $\ve{Y}$ are the inner products of $x(t)$ with t-f atoms $\phi_{fn}(t)$, where $f$ and $n$ index frequencies and time frames, respectively, and a common choice is $\ve{P} = | \ve{Y} |^{2}$. The STFT coefficients are so-called {\em analysis} coefficients. As such, spectral decomposition by NMF can be viewed as a low-rank time-frequency {\em analysis} procedure. Leveraging on the potential of {\em synthesis} models as opposed to analytical ones (see, e.g., \cite{elad2007analysis,balazs2013adapted,sprechmann2013supervised,nam2013cosparse}), we propose to explore a dual view of the usual NMF approach and present a new paradigm that we name {\em low-rank time-frequency synthesis} (LRTFS). In this new paradigm, the signal is decomposed as
	\bal{ \label{eq:synth}
	x(t) = \sum_{fn} \alpha_{fn} \phi_{fn}(t) + e(t)
	}
	where the synthesis coefficients $\{\alpha_{fn}\}$ are endowed with a low-rank structure such that $|\alpha_{fn}|^{2} \approx [\W \H]_{fn}$. Formulation~\eqref{eq:synth} provides a generative representation of the raw data $x(t)$ and extends the modeling capacities of standard NMF-based signal decomposition towards more advanced multi-layer hybrid decompositions. Having a generative model of the raw data (instead of its transform) is also useful for some inverse problems such as compressive sampling, an application that will be considered in the paper.

	The low-rankness of the synthesis coefficients $\{ \alpha_{fn} \}$ is induced through a probabilistic model named {\em Gaussian Composite Model} (GCM) \cite{neco09}. The GCM underlies Itakura-Saito NMF, a baseline method that will be recalled in Section~\ref{sec:isnmf}. Section~\ref{sec:lrtfs-comp} presents our new paradigm LRTFS in the general case of complex-valued signals. It also describes an alternate minimization algorithm for maximum joint likelihood estimation of the parameters. Section~\ref{sec:lrtfs-real} shows how the methodology for complex signals can be adapted to real-valued signals. \new{Section~\ref{sec:related} discusses how LRTFS relates to other temporal models with low-rank spectrograms and/or structured variance \cite{Kameoka2015,Kameoka2017,Liutkus2011,Yoshii2013,Turner2014,Liutkus2017,Yoshii2018}.} Section~\ref{sec:hybrid} describes how LRTFS can accommodate more advanced multi-layer decompositions in which every layer can have its own t-f resolution or structure (e.g., a sparse instead of low-rank time-frequency structure). 
	Section~\ref{sec:compress} describes a new approach to compressive sampling, which exploits latent low-rank time-frequency structure instead of sparsity, with superior results for the considered type of data. The article is illustrated throughout with experiments on audio signals (the presented methodology is however not limited to audio signals). \new{In particular, we use a running piano toy example to illustrate every stage of our contributions.}

	This article unifies and continues our work presented in the conference papers \cite{nips14,Fevotte2015a}. In particular, it provides more detailed experiments and presents the following novel methodological contributions: the case of real-valued signals (which require to properly handle the Hermitian symmetry of their synthesis coefficients) is now rigorously treated in Section~\ref{sec:lrtfs-real}, algorithm accelerations are presented in Section~\ref{sec:mjle}, and the concept of compressive LRTFS presented in Section~\ref{sec:compress} is entirely novel.

	\section{A baseline: Itakura-Saito NMF and the Gaussian Composite Model (GCM) \label{sec:isnmf}}

	NMF was originally designed in a deterministic setting \cite{lee99}: a measure of fit between $\ve{P}$ and $\W \H$ is minimized with respect to (w.r.t) $\W$ and $\H$. Choosing the ``right'' measure for a specific type of data and task is not straightforward. Furthermore, NMF-based spectral decompositions often arbitrarily discard phase information: only the magnitude of the complex-valued short-time Fourier transform (STFT) is considered. To remedy these limitations, a generative probabilistic latent factor model of the STFT, the GCM, was proposed in \cite{neco09}. It is defined by 
	\bal{ \label{eqn:gcm}
	y_{fn} \sim N_{c}(0, [\W\H]_{fn}),
	}
	where $N_{c}$ refers to the circular complex-valued normal distribution.\footnote{A random variable $x$ has distribution $N_{c}(x|\mu,\lambda) = (\pi \lambda)^{-1} \exp - (|x-\mu|^{2}/\lambda)$ if and only if its real and imaginary parts are independent and with distribution $N( \Re[\mu],\lambda/2)$ and $N(\Im[\mu],\lambda/2)$, respectively.} As shown by Eq.~\eqref{eqn:gcm}, in the GCM the STFT is assumed centered and its variance has a low-rank structure. Many temporal waveforms (such as audio signals) can be assumed centered and this remains true for their Fourier coefficients by linearity of the transformation. This explains the zero-mean assumption in the GCM on the one hand. The low-rank variance structure on the other hand underlies a composite signal structure that makes the model relevant for decomposition task. Indeed, introducing the latent complex-valued components $y_{kfn}$, Eq.~\eqref{eqn:gcm} is equivalent to
	\bal{
	y_{fn} &= \sum_{k} y_{kfn}, \label{eqn:gcm1} \\
	y_{kfn} &\sim N_{c}(0, w_{fk} h_{kn}). \label{eqn:gcm2}
	}
	The latent component $\ve{Y}_{k}$ with coefficients $\{y_{kfn}\}_{fn}$ reflects the contribution of the spectral pattern $\ve{w}_{k}$, the $k^{th}$ column of $\W$, amplitude-modulated in time by the activation coefficients of the $k^{th}$ row of $\H$.

	Under these assumptions, the negative log-likelihood $-\log p(\ve{Y}|\W,\H)$ is equal, up to a constant, to the Itakura-Saito (IS) divergence $D_{\IS}(\ve{P}|\W \H)$ between the power spectrogram $\ve{P} = |\ve{Y}|^{2}$ and $\W \H$. The IS divergence between nonnegative matrices $\ve{A}$ and $\ve{B}$ is defined by 
	\bal{
	D_{\IS}(\ve{A}| \ve{B}) = \sum_{ij} \frac{a_{ij}}{b_{ij}} - \log \frac{a_{ij}}{b_{ij}} -1 .
	}

	The GCM is a step forward from traditional NMF approaches that fail to provide a valid generative model of the STFT itself -- other approaches have only considered probabilistic models of the magnitude spectrogram under Poisson or multinomial assumptions, see \cite{Smaragdis2014} for a review. Still, the GCM is not yet a generative model of the raw signal $x(t)$ itself, but of its STFT. LRTFS fills in this ultimate gap.

	\section{Low-rank time-frequency synthesis (LRTFS) \label{sec:lrtfs}}

	In this section we first present LRTFS for complex-valued signals, closely following \cite{nips14}. Then we rigorously address the case of real-valued signals represented as a complex-valued linear combination of complex-valued t-f atoms (such as Gabor atoms) with Hermitian symmetry. Finally, we discuss relevant connections with the state-of-the-art and illustrate the potential of LRTFS on an audio example.

	\subsection{Complex-valued signals \label{sec:lrtfs-comp}}

	\subsubsection{Model} \label{sec:lrtfs-mod}

	Let ${x}(t)$ denote a complex-valued signal of length $T$ and $\{\phi_{fn}(t)\}_{f=1..F,n=1..N}$ denote a dictionary of complex-valued t-f atoms of length $T$. LRTFS is defined as follows. For $t = 1, \ldots, T$, $f = 1,\ldots,F$, $n = 1,\ldots,N$:
	\begin{align}
		x(t) &= \sum_{fn} \alpha_{fn} \phi_{fn}(t) + e(t), \label{eqn:sig} \\
		\alpha_{fn} &\sim N_{c}(0,[\W \H]_{fn}), \label{eqn:alpha} \\
		e(t) &\sim N_{c}(0,\lambda), \label{eqn:res}
	\end{align}
	where $\{ \alpha_{fn} \}$ are the complex-valued synthesis coefficients, $\W$ and $\H$ are nonnegative matrices of sizes $F \times K$ and $K \times N$, respectively, and $e(t)$ is an additive complex-valued residual term with Gaussian distribution $N_{c}(0,\lambda)$. The synthesis coefficients $\{ \alpha_{fn} \}$ are furthermore assumed independent given $\W$ and $\H$. The synthesis coefficients are dual of the analysis coefficients, defined by $y_{fn} = \sum_{t} x(t) \phi_{fn}^{*}(t)$, where $\cdot^{*}$ denotes conjugation. IS-NMF assumes that the {\it analysis} coefficients follow a GCM, see Eq.~\eqref{eqn:gcm}. In contrast, LRTFS assumes that the {\it synthesis} coefficients follow a GCM, as given by Eq.~\eqref{eqn:alpha}. As announced, LRTFS provides a generative model of the raw data $x(t)$, where IS-NMF only provides a generative model of the transformed data $\ve{Y}$.

Let us denote by $\ve{x}$	and $\ve{e}$ the column vectors of size $T$ with coefficients $x(t)$ and $e(t)$, respectively. \new{Let ${\cal I}$ be an arbitrary one-to-one ``vectorizing'' mapping from $(f,n) \in \{1,\ldots,F\} \times \{1,\ldots,N\} $ to $m = {\cal I}(f,n)  \in \{1,\ldots, M\} $, where $M=FN$. We denote by $\balpha$ the column vector of dimension $M$ with coefficients $\alpha_{m} = \alpha_{ {\cal I}(f,n)} = \alpha_{fn} $. We are abusing the notations by indexing the synthesis coefficients by either $m$ (unstructured vectorized form) or $(f,n)$ (matrix form where $f$ indexes frequencies and $n$ indexes time frames). It should be understood that $m$ and $(f,n)$ are in one-to-one correspondence and the meaning should be clear from the context. The notation $\alpha_{m}$ discards the inherent t-f structure of the coefficients while the notation $\alpha_{fn}$ makes it explicit. Despite abusing, this convention allows to significantly reduce cluttering in the following. Similarly, we denote by $\bPhi$ the matrix of size $T\times M$ with columns $\bphi_{m} = \bphi_{fn}$, where $\bphi_{fn}$ is the column vector of size $T$ with coefficients $\phi_{fn}(t)$. We denote by $\ve{v}$ the column vector of dimension $M$ with coefficients $v_{m} = v_{fn} \defequal [\W \H]_{fn}$. We will sometimes write $\ve{v} = \text{vect}[\W \H]$, where $\text{vect}[\cdot]$ refers to the vectorizing operator induced by ${\cal I}$.}	
%
	Equipped with these notations, we may write Eq.~\eqref{eqn:sig} and~\eqref{eqn:alpha} as
	\bal{
	\ve{x} & = \bPhi \balpha + \ve{e},  \label{eqn:matform} \\
	\balpha &\sim N_{c}(\ve{0}, \text{diag}(\ve{v})), \label{eqn:prioralph} \\
	\ve{e} &\sim N_{c}( \ve{0}, \lambda \, \ve{I}_{T}). \label{eqn:priore}
	}
	Ignoring the low-rank structure of $\ve{v}$, Eqs.~\eqref{eqn:matform}-\eqref{eqn:priore} resemble sparse Bayesian learning (SBL), as introduced in \cite{tip01,Wipf2004}, where it is shown that marginal likelihood estimation of the variance induces sparse solutions of $\ve{v}$ (and as a consequence, of $\balpha$). The essential difference between our model and SBL is that the coefficients are no longer unstructured in LRTFS. Indeed, in SBL, each coefficient $\alpha_{m}$ has a free variance parameter $v_{m}$. This property is fundamental to the sparsity-inducing effect of SBL \cite{tip01}. In contrast, in LRTFS, the variances are now tied together and such that $v_{m} = v_{fn} = [\W \H]_{fn}\ $.

	\subsubsection{Maximum joint likelihood estimation \label{sec:mjle}}

	We now address the estimation of $\W$, $\H$ and $\balpha$ and possibly $\lambda$ in LRTFS. We consider maximum joint likelihood estimation (MJLE), also referred to as type-I maximum likelihood estimation in \cite{Wipf2004}. MJLE relies on the minimization of the following objective function:
	\bal{
	&C_{\JL}(\balpha,\W,\H,\lambda) \defequal - \log p(\ve{x},\balpha|\W,\H,\lambda) \\
	& = -\log p(\ve{x}|\balpha,\lambda) - \log p(\balpha|\W,\H) \\
	& =  \frac{1}{\lambda} \|\ve{x} - \bPhi\balpha\|_2^2 + \sum_{fn} \left[ \frac{|\alpha_{fn}|^2}{[\W\H]_{fn}} + \log{[\W\H]_{fn}} \right] + cst \label{eqn:shrink} \\
	& = \frac{1}{\lambda} \|\ve{x} - \bPhi\balpha\|_2^2 + D_{\text{IS}}(|\balpha|^2|\ve{v}) + \log(|\balpha|^2) + cst  \label{eqn:cjl}
	}
	where $cst = T \log\lambda + (T+M) \log \pi $ and we recall that $\ve{v} = \text{vect}[\W \H]$. 

	Another possible estimation procedure for LRTFS is maximum marginal likelihood estimation (MMLE), also referred to as type-II maximum likelihood estimation in \cite{Wipf2004}. It relies on the minimization of $- \log p(\ve{x}|\W,\H,\lambda)$, i.e., involves the marginalization of $\balpha$ from the joint likelihood, following the principle of SBL. We considered MMLE for LRTFS in \cite{nips14} and presented a valid EM algorithm. However our implementation does not scale with the dimensions involved in signal processing, \new{as it requires the estimation of the diagonal elements of the inverse of a $M\times M$ matrix.} Large-scale algorithms for MMLE are left as future work.

	\subsubsection{Alternate minimization algorithm for MJLE \label{sec:algo}}

	We now describe an alternate minimization algorithm that returns stationary points of $C_{\JL}(\btheta)$, where $\btheta = \{\balpha,\W,\H,\lambda\}$. The optimization of $\balpha$ given the other parameters reduces to
	\bal{
	\label{eq:min_alpha}
	\min_{\balpha \in \mathbb{C}^{M}} \frac{1}{\lambda} \|\ve{x} - \bPhi\balpha\|_2^2  + \sum_{fn} \frac{\left|\alpha_{fn}\right|^2}{[\W \H]_{fn}}
	}
	which defines a convex ridge regression problem. The problem has the closed-form solution
	\bal{ \label{eqn:sol}
	\hat{\balpha} = \left[ \bPhi\hermconj \bPhi + \lambda \diag{\ve{v}}^{-1} \right]^{-1} \bPhi\hermconj \ve{x}
	}
	{where $\cdot\hermconj$ denotes conjugate transpose. Eq.~\eqref{eqn:sol} involves the resolution of a linear system of size $M\times M$. The linear system can be reduced to dimension $T$ thanks to the Woodbury identity, but this is still too large in typical signal processing applications. {Computing Eq.~\eqref{eqn:sol} can be done efficiently with} a numerical optimization procedure and several options are available, such as conjugate gradient descent (GCD), expectation-minimization (EM), forward-backward optimization or majorization-minimization. The latter three are closely related and lead in the present case to an iterative shrinkage algorithm (ISA)\cite{Figueiredo2003,Beck2009}. }
	We used in our implementation a complex-valued version of ISA, similar to the complex-valued cases treated in \cite{Chaari2011,Florescu2014}, and using the acceleration described in \cite{chambolle2015convergence}. This leads to a simple and parameter-free implementation with satisfactory speed of convergence. This in particular results in a faster algorithm than the original EM algorithm presented in our initial contribution \cite{nips14}. \new{Moreover, ISA leads in our case to a simpler algorithm than GCD, with same theoretical and practical speed of convergence~\cite{Ben-Tal2004}}. The resulting updates are given in Algorithm~\ref{alg:LRTFS}. The value of the inverse step-size $L$ should be set to the maximum eigenvalue of $\bPhi\hermconj \bPhi$, i.e., the squared spectral norm of $\bPhi$. If this value is not available in closed form or difficult to compute, a larger value $L \ge \| \bPhi \|_{2}^{2}$ is also permissible but will result in smaller step sizes. In Algorithm~\ref{alg:LRTFS}, the operations $\ve{A} \circ \ve{B}$, ${\ve{A}}^{\circ p }$ and $\frac{\ve{A}}{\ve{B}}$ denote entry-wise multiplication, exponentiation and division, respectively.

	The optimization of $\W$ and $\H$ given $\balpha$ reduces to
	\bal{
	\label{eq:min_WH}
	\min_{\W,\H \ge {0}} \ \sum_{fn} D_{\IS}( |\alpha_{fn}|^{2} | [\W \H]_{fn})
	}
	which defines a IS-NMF problem with input matrix $\ve{S} = [|\alpha_{fn}|^{2}]_{fn}$. This a non-convex problem that is generally approached with alternating updates of $\W$ and $\H$ and majorization-minimization (MM) \cite{betanmf}. This results in the multiplicative updates given in Algorithm~\ref{alg:LRTFS}. 

	Finally, the optimization of $\lambda$ given $\balpha$ is trivially given by $\hat{\lambda} = \| \ve{x} - \bPhi \balpha \|/T$. However, the MJLE setting is known to be inefficient for the estimation of both the variance parameters of $\balpha$ and of $\ve{e}$, with either $\bPhi \hat{\balpha}$ or $\hat{\ve{e}}$ capturing most of the signal variance. As such, though the estimation of $\lambda$ is possible in principle, we will consider $\lambda$ to be a fixed hyper-parameter in the following.

	The objective function $C_{\JL}$ being non-convex
	and because we are using an alternate minimization algorithm, 
	the output of Algorithm~\ref{alg:LRTFS} depends on the initialization. In all simulations we initialized the synthesis coefficients $\balpha$ with the \new{analysis} coefficients $\bPhi\hermconj\ve{x}$. The matrices $\W$ and $\H$ are initialized using the absolute values of the complex SVD of the synthesis coefficients \cite{becker2015complex}. Finally, a tempering strategy with warm restart is used to speed up convergence for small target values of $\lambda$. The hyper-parameter $\lambda$ is set to an arbitrarily large value in the first iterations and is then gradually decreased to its target value, as proposed in~\cite{hale2008fixed}. \new{Convergence of the main and inner loops is monitored using the relative difference in norm between successive parameter iterates, see Section~\ref{sec:exlrtfs}.}
	

	\begin{algorithm}[h!]
		\label{alg:LRTFS}
		Set $L = \| \bPhi \|_{2}^{2}$ (or a larger value)\\		
		Compute the synthesis coefficients $\ve{y} = \bPhi\hermconj\ve{x}$ (with matrix form $\ve{Y}$) \\
		Set $\balpha^{(0)} = \ve{y}$ \\
		Initialize $\W^{(0)}$ and $\H^{(0)}$ with the absolute values of the complex SVD of $\ve{Y}$ \\
		Set $i=0$
	
		\Repeat{convergence}{
		
		\%\% {Update $\W$ and $\H$ with MM}\\
		Compute spectrogram $\ve{S}^{(i)} = \left[|\alpha_{fn}^{(i)}|^{2} \right]_{fn}$ \\
		Initialize inner loop: $\W =\W^{(i)} $, $ \H=\H^{(i)}$ \\
						
		\Repeat{convergence}{

		$\W \leftarrow \W \circ \frac{[\ve{S}^{(i)} \circ (\W \H)^{\circ-2}]\H^\top}{[(\W \H)^{\circ-1}]\H^\top}$ \\
		$\ \H \leftarrow \H \circ \frac{ \W^\top [ \ve{S}^{(i)} \circ (\W \H)^{\circ-2}]}{\W^\top[(\W \H)^{\circ-1}]}$
			
		}
			
		Leave inner loop: $\W^{(i+1)}=\W$, $\H^{(i+1)}=\H$ \\
		Set $\bv^{(i+1)}= \text{vect}[\W^{(i+1)}\H^{(i+1)}]$
		
		\medskip
				
		\%\%  {Update $\balpha$ with accelerated ISA} \\			
		Initialize inner loop: $\ve{a}^{(0)} = \ve{z}^{(0)}=\balpha^{(i)}$ \\
		Set  $j=0$
		
		\Repeat{convergence }{
		\% Descend \\
		$\ve{z}^{(j+1/2)} = \ve{a}^{(j)} + \frac{1}{L} \bPhi\hermconj(\ve{x}-\bPhi\ve{a}^{(j)})$ \\
		\% Shrink \\
		$\ve{z}^{(j+1)} = \frac{\ve{v}^{(i+1)}}{\ve{v}^{(i+1)}+ {\lambda}/{L}} \circ \ve{z}^{(j+1/2)}$ \\
		\% Accelerate \\
		$\ve{a}^{(j+1)} = \ve{z}^{(j+1)}+\frac{j}{j+5}(\ve{z}^{(j+1)}-\ve{z}^{(j)})$ \\
		{ $j \leftarrow j+1$}
		}
		Leave inner loop: $\balpha^{(i+1)} = \ve{z}^{(j+1)}$ 
		}
		
		\caption{Alternate minimization for LRTFS}
	\end{algorithm}

	\subsubsection{Reconstruction of the latent components}

\new{Algorithm~\ref{alg:LRTFS} outputs an estimate of $\balpha$, $\W$, $\H$. The approximate signal can directly be recovered from the estimated synthesis coefficients as $\hat{\ve{x}} = \bPhi \hat{\balpha}$.} LRTFS further assumes that the synthesis coefficients follow a GCM, see Eq.~\eqref{eqn:alpha}. As such, $\alpha_{fn}$ may be written as a sum of Gaussian latent components, such that $\alpha_{fn} = \sum_{k} \alpha_{kfn}$, with $\alpha_{kfn} \sim N_{c}(0,w_{fk} h_{kn})$. Denoting by $\balpha_{k}$ the column vector of dimension $M$ with coefficients $\{\alpha_{kfn} \}_{fn}$, Eq.~\eqref{eqn:matform} may be written as
	\bal{ \label{eqn:comp}
	\ve{x} = \sum_{k} \bPhi \balpha_{k} + \ve{e} = \sum_k {\ve{c}_{k}} + \ve{e}\ ,
	}
	where $\ve{c}_{k} = \bPhi \balpha_{k} $. The component $\ve{c}_{k}$ is the ``temporal expression'' of spectral pattern $\ve{w}_{k}$, the $k^{th}$ column of $\ve{W}$. Given estimates of $\balpha$, $\W$ and $\H$, the components may be reconstructed \new{a posteriori} in various ways. A natural choice is $\hat{\ve{c}}_{k}^{\text{MMSE}}= \bPhi \hat{\balpha}_{k}^{\text{MMSE}}$ with 
	\bal{
	\hat{\balpha}_{k}^{\text{MMSE}} \, \defequal \, \EE [\balpha_{k} | \ve{x}, \hat{\btheta} ] = \EE[\balpha_{k} | \hat{\balpha}, \hat{\W}, \hat{\H} ].
	}
	The coefficients of $\hat{\balpha}_{k}^{\text{MMSE}}$ are given by
	\bal{ \label{eqn:mmse}
	\hat{\alpha}_{kfn}^{\text{MMSE}} =  \frac{\hat{w}_{fk} \hat{h}_{kn}}{\ [\hat{\W} \hat{\H}]_{fn}} \hat{\alpha}_{fn}.
	}
	Using this estimate, the latent components are reconstructed by applying a t-f dependent ``Wiener mask'' to the synthesis coefficients. This procedure and the expression of $\hat{\alpha}_{fkn}^{\text{MMSE}}$ is analog to the standard Wiener estimate of the latent components in IS-NMF applied to $| \ve{Y}|^{2}$ \cite{neco09} and 
	given by
	\bal{ \label{eqn:wienmf}
	\hat{y}_{kfn}^{\text{MMSE}} = \frac{\hat{w}_{fk} \hat{h}_{kn}}{\ [\hat{\W} \hat{\H}]_{fn}} y_{fn}.
	}
	The estimate $\hat{\balpha}$ is used as an intermediate variable in the expression of $\hat{\alpha}_{kfn}^{\text{MMSE}}$ given by Eq.~\eqref{eqn:mmse}. Another possible estimate, which marginalizes $\balpha$, is
	\bal{
	\hat{\balpha}_{k} \, = \, \EE [ \balpha_{k} | \ve{x}, \hat{\W}, \hat{\H}, \hat{\lambda} ],
	}
	where $\hat{\ve{v}}_{k}$ is the vector of dimension $M$ with coefficients $\{\hat{w}_{fk}\hat{h}_{kn}\}_{fn}$. The input of the estimator is now the raw data $\ve{x}$ which may be more sensible. \new{The expression of this alternative estimate can be derived in closed-form but the resulting expression involves the large-scale inversion of $T \times T$ matrices, which is hardly feasible in practice.}


	\subsection{Real-valued signals \label{sec:lrtfs-real}}

	\subsubsection{Model} \label{sec:rLRTFS}

	In many signal processing settings the data is a real-valued signal $x(t)$ expressed as a linear combination of complex-valued t-f atoms with Hermitian symmetry. More specifically, the dictionary and synthesis coefficients are such that $\bphi_{fn} = \bphi_{(F-f)n}^{*}$ and $\alpha_{fn} = \alpha_{(F-f)n}^{*}$ for $f=1,\ldots, F/2$ (assuming $F$ to be even-valued for simplicity), where $\cdot^{*}$ denotes conjugation. Under this particular structure, we have
	\bal{
	\sum_{f=1}^{F} \sum_{n=1}^{N} \alpha_{fn} \phi_{fn}(t) = \sum_{f=1}^{F/2} \sum_{n=1}^{N} 2 \Re[\alpha_{fn} \phi_{fn}(t)]
	}
	and we define real-valued LRTFS (rLFTS) as follows. For $t = 1, \ldots, T$, $f = 1,\ldots,F/2$, $n = 1,\ldots,N$:
	\begin{align}
		x(t) &= \sum_{f=1}^{F/2} \sum_{n=1}^{N} 2 \Re[\alpha_{fn} \phi_{fn}(t)] + e(t) \label{eqn:sigR}, \\
		\alpha_{fn} &\sim N_{c}(0,[\W \H]_{fn}) \label{eqn:alphaR}, \\
		e(t) &\sim N(0,\lambda) \label{eqn:resR}.
	\end{align}
	Note how $F$ now runs from $1$ to $F/2$ instead of 1 to $F$. The synthesis coefficients $\alpha_{fn}$ remain complex-valued and the residual $e(t)$ becomes real-valued. $\W$ and $\H$ are nonnegative matrices of sizes $F/2 \times K$ and $K \times N$, respectively.

	Let us now denote by $\underline{\balpha}$ and ${\ve{v}}$ the vectors of dimension $M/2$ with coefficients $\alpha_{fn}$ and $v_{fn} = [\W \H]_{fn}$, respectively, and by $\underline{\bPhi}$ the matrix of dimension $T \times M/2$ with columns $\bphi_{fn}$, for $f=1,\ldots,F/2$ and $n=1,\ldots,N$. With these notations we have $\balpha = [\underline{\balpha}^\top, \underline{\balpha}\hermconj]^\top$, $\bPhi = [\underline{\bPhi}, \underline{\bPhi}^{*}]$ and $\bPhi \balpha = 2\Re[\underline{\bPhi} \underline{\balpha}]$. Consequently, we may write Eq.~\eqref{eqn:sigR}-\eqref{eqn:resR} as
	\bal{
	\ve{x} & = 2 \Re[\underline{\bPhi} \underline{\balpha}] + \ve{e}  \label{eqn:matformR}, \\
	\underline{\balpha} &\sim N_{c}(\ve{0}, \text{diag}({\ve{v}})) \label{eqn:prioralphR}, \\
	\ve{e} &\sim N( \ve{0}, \lambda \, \ve{I}_{T}) \label{eqn:prioreR}.
	}

	\subsubsection{Estimation} \label{sec:rLFTSestim}

	The MJLE objective function for rLRTFS writes
	\bal{
	&C_{\JL}^{\Re}(\ul{\balpha},\W,\H,\lambda) \defequal - \log p(\ve{x},\bualpha|\W,\H,\lambda) \\
	& =  \frac{1}{2 \lambda} \|\ve{x} - 2\Re[\buPhi \bualpha]\|_2^2 \\
	& \quad + \sum_{f=1}^{F/2} \sum_{n=1}^{N} \left[ \frac{|\alpha_{fn}|^2}{[\W\H]_{fn}} + \log{[\W\H]_{fn}} \right] + cst \\
	& = \frac{1}{2 \lambda} \|\ve{x} - 2\Re[\buPhi\bualpha]\|_2^2 + D_{\text{IS}}(|\bualpha|^2|{\ve{v}}) + \log(|\bualpha|^2) + cst  \label{eqn:cjl2}
	}
	where $cst = \frac{T}{2} \log (2\pi\lambda) + \frac{M}{2} \log \pi$. Using an alternate minimization setting like in Section~\ref{sec:algo}, the updates of $\W$ and $\H$ are virtually unchanged. They amount to IS-NMF of the matrix form of the synthesis spectrogram $|\bualpha|^{2}$ (of size $F/2 \times N$). The update of $\lambda$ is easily given by $\hat{\lambda} = \|\ve{x} - 2\Re[\buPhi \bualpha]\|_2^2 /T$, but here again we prefer to treat $\lambda$ as an hyper-parameter. The update of $\bualpha$ involves the following minimization problem:
	\bal{
	\min_{\bualpha \in \mathbb{C}^{M/2}} F(\bualpha) \defequal \frac{1}{2\lambda} \|\ve{x} - \Re[\buPhi\bualpha] \|_2^2  + \sum_{f=1}^{F/2} \sum_{n=1}^{N} \frac{\left|\alpha_{fn}\right|^2}{[\W \H]_{fn}}.  \label{eqn:minar}
	}
	The problem defined by Eq.~\eqref{eqn:minar} has a closed-form solution, with a less simpler expression than Eq.~\eqref{eqn:sol}. The solution is still computationally demanding and the following numerical procedure is preferable. 

	\begin{theorem}[Iterative shrinking algorithm for rLRTFS] 
		\label{thm}
		Let $L = \| \bPhi \|_{2}^{2}$ (with $\bPhi = [\underline{\bPhi}, \underline{\bPhi}^{*}]$) and $\bualpha^{(0)}$ be an initial estimate. The following sequence of updates converge to the global solution of problem~\eqref{eqn:minar}:
		\bal{
		\bualpha^{(j+1/2)} &= \bualpha^{(j)} + \frac{1}{L} \buPhi\hermconj (\ve{x} - 2 \Re[\buPhi \bualpha^{(j)}]), \label{eqn:alph1} \\
		\bualpha^{(j)} & = \frac{\ve{v}}{\ve{v} +  \lambda/L} \circ \bualpha^{(j+1/2)}. \label{eqn:alph2}
		}
	\end{theorem}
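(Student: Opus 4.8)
The plan is to recognize the pair of updates \eqref{eqn:alph1}--\eqref{eqn:alph2} as a single iteration of forward--backward (proximal gradient) splitting applied to the objective $F$ of \eqref{eqn:minar}, and then to invoke the standard convergence theory for that scheme. First I would split $F = g + h$, with the smooth data term $g(\bualpha) = \tfrac{1}{2\lambda}\|\ve{x} - 2\Re[\buPhi\bualpha]\|_2^2$ and the separable penalty $h(\bualpha) = \sum_{f,n} |\alpha_{fn}|^2/[\W\H]_{fn}$, regarding $\CC^{M/2}$ as a real Hilbert space under $\langle \ve{a},\ve{b}\rangle = \Re[\ve{a}\hermconj\ve{b}]$. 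Both terms are convex and $h$ is strongly convex (as $\ve{v}>0$ entrywise), so $F$ is strongly convex with a unique minimizer; ``the global solution'' of \eqref{eqn:minar} is therefore well defined, and it suffices to show that the iterates converge to it.

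Next I would check that \eqref{eqn:alph1} is exactly a gradient step on $g$ and \eqref{eqn:alph2} exactly the proximal step on $h$, sharing one step size. A short computation (via Wirtinger calculus, or by expanding in the real and imaginary parts of $\bualpha$) gives $\nabla g(\bualpha) = -\tfrac{2}{\lambda}\buPhi\hermconj(\ve{x} - 2\Re[\buPhi\bualpha])$, so \eqref{eqn:alph1} reads $\bualpha^{(j+1/2)} = \bualpha^{(j)} - s\,\nabla g(\bualpha^{(j)})$ with $s = \lambda/(2L)$. Solving $\argmin_{\bualpha}\{\tfrac12\|\bualpha - \ve{z}\|_2^2 + s\,h(\bualpha)\}$ coordinatewise yields $\mathrm{prox}_{sh}(\ve{z}) = \tfrac{\ve{v}}{\ve{v}+2s}\circ\ve{z}$, which for $s = \lambda/(2L)$ is precisely the shrinkage $\tfrac{\ve{v}}{\ve{v}+\lambda/L}\circ\ve{z}$ of \eqref{eqn:alph2}. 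Hence one pass of \eqref{eqn:alph1}--\eqref{eqn:alph2} equals $\bualpha^{(j+1)} = \mathrm{prox}_{sh}(\bualpha^{(j)} - s\,\nabla g(\bualpha^{(j)}))$ with $s = \lambda/(2L)$.

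Finally I would verify the admissibility condition $s \le 1/L_g$, where $L_g$ is the Lipschitz modulus of $\nabla g$. As $g$ is quadratic, $L_g$ equals the top eigenvalue of its Hessian; writing the map $\bualpha\mapsto 2\Re[\buPhi\bualpha]$ as a real linear operator $A$ on the stacked real/imaginary parts of $\bualpha$, this gives $L_g = \|A\|_2^2/\lambda$. The Hermitian structure $\balpha = [\bualpha^\top,\bualpha\hermconj]^\top$, for which $\|\balpha\|_2^2 = 2\|\bualpha\|_2^2$ and $2\Re[\buPhi\bualpha] = \bPhi\balpha$, then yields
\begin{equation*}
\|A\|_2^2 = \sup_{\bualpha\neq\ve{0}}\frac{\|2\Re[\buPhi\bualpha]\|_2^2}{\|\bualpha\|_2^2} = \sup_{\bualpha\neq\ve{0}}\frac{\|\bPhi\balpha\|_2^2}{\tfrac12\|\balpha\|_2^2} \le 2\|\bPhi\|_2^2 = 2L ,
\end{equation*}
so that $L_g \le 2L/\lambda = 1/s$. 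With $s \le 1/L_g$ and $F$ convex, the classical convergence theorem for forward--backward splitting \cite{Beck2009} guarantees that the iterates converge to a minimizer of $F$, which by strong convexity is the unique global solution of \eqref{eqn:minar}.

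The derivative and proximal computations are routine; the point demanding care is the factor-of-$2$ bookkeeping forced by the Hermitian symmetry, both when matching the step sizes in \eqref{eqn:alph1}--\eqref{eqn:alph2} and in the operator-norm bound $\|A\|_2^2 \le 2\|\bPhi\|_2^2$. It is precisely these constants that make the prescribed $L = \|\bPhi\|_2^2$ an admissible inverse step size, and hence turn the fixed-point iteration into a convergent one.
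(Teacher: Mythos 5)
Your proof is correct and follows essentially the same route as the paper's: both recast problem~\eqref{eqn:minar} as a real quadratic problem over the stacked real and imaginary parts of $\bualpha$, identify \eqref{eqn:alph1}--\eqref{eqn:alph2} as one forward--backward/ISA iteration, and reduce the admissibility of the step size to the relation between $\|\bPhi\|_2^2$ and the norm of the real operator $\ve{A}$. The only notable difference is that the paper proves the exact identity $L = 2 L_{\ve{A}}$ via an eigenvector-symmetrization argument on $\bPhi\hermconj\bPhi$, whereas you settle for the one-sided bound $\|\ve{A}\|_2^2 \le 2\|\bPhi\|_2^2$ obtained by restricting the supremum to Hermitian-structured vectors --- which is all the convergence theorem requires.
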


\begin{proof}
		The proof consists in reformulating~Eq.~\eqref{eqn:minar} as a quadratic optimization problem over the real and imaginary parts of $\bualpha$ and applying ISA. 
		Let $\ve{A} = 2 [\Re[\buPhi], -\Im[\buPhi]] $, $\ve{b} = [\Re[\bualpha]^\top \Im[\bualpha]^\top]^\top$ and $\ve{c} = \frac{1}{2} [\ve{v}^\top \ve{v}^\top]^\top$. Then we may write
		\bal{ \label{eqn:minproof}
		F(\bualpha) = F(\ve{b}) = \frac{1}{2\lambda} \| \ve{x} - \ve{A} \ve{b} \|_{2}^{2} + \frac{1}{2} \sum_{m=1}^{M} \frac{b_{m}^{2}}{c_{m}}.
		}
		Denoting $L_{\ve{A}} = \| \ve{A} \|_{2}^{2}$, the ISA update for problem~\eqref{eqn:minproof} writes \cite{Figueiredo2003,Beck2009}
		\bal{
		\ve{b}^{(j+1/2)} &= \ve{b}^{(j)} + \frac{1}{L_{\ve{A}}} \ve{A}^\top (\ve{x} - \ve{A} \ve{b}^{(j)}), \label{eqn:b1} \\
		\ve{b}^{(j)} & = \frac{\ve{c}}{\ve{c} + \lambda/L_{\ve{A}}} \circ \ve{b}^{(j+1/2)}. \label{eqn:b2}
		}
		Using the identities $\ve{A} \ve{b} =  2 \Re[\buPhi \bualpha]$ and
		\bal{
		\ve{A}^\top \ve{e} = 
		2 \begin{bmatrix} \Re[\buPhi\hermconj \ve{e}] \\ \Im[\buPhi\hermconj \ve{e}] \end{bmatrix},
		}
		Eqs.~\eqref{eqn:b1}-\eqref{eqn:b2} can be rearranged in complex form as
		\bal{
		\bualpha^{(j+1/2)} &= \bualpha^{(j)} + \frac{2}{L_{\ve{A}}} \buPhi\hermconj (\ve{x} - 2 \Re[\buPhi \bualpha^{(j)}]) \\
		\bualpha^{(j)} & = \frac{\ve{v}}{\ve{v} +  2 \lambda/L_{\ve{A}}} \circ \bualpha^{(j+1/2)}.
		}
		To complete the proof we only need to show that $L =  2 L_{\ve{A}} $. Let $\balpha = [\bualpha_{1}^\top, \bualpha_{2}^\top]^\top$ be an eigenvector of $\bPhi\hermconj \bPhi$ with maximum eigenvalue $L = \|\bPhi \|_{2}^{2}$. By definition we have
		\bal{
		\begin{bmatrix}
			\buPhi\hermconj \buPhi & \buPhi\hermconj \buPhi^{*} \\
			\buPhi^\top \buPhi & \buPhi^\top \buPhi^{*}
		\end{bmatrix}
		\begin{bmatrix}
			\bualpha_{1} \\
			\bualpha_{1}
		\end{bmatrix}
		=
		L
		\begin{bmatrix}
			\bualpha_{1} \\
			\bualpha_{1}
		\end{bmatrix}
		\label{eqn:eig}
.}
		By taking the conjugate of Eq.~\eqref{eqn:eig}, we easily show that $[\bualpha_{2}\hermconj, \bualpha_{1}\hermconj]^\top$ is also an eigenvector with eigenvalue $L$. It follows that $[\bualpha_{1} + \bualpha_{2}\hermconj, \bualpha_{2} + \bualpha_{1}\hermconj]^\top$ is also an eigenvector, which happens to have a Hermitian structure. We may thus impose $\bualpha_{2} = \bualpha_{1}^{*}$ and as such $\balpha = [\bualpha^\top, \bualpha\hermconj]^\top$. Then we have the following series of equivalences:
		\bal{
		\bPhi\hermconj \bPhi \balpha = L \balpha &\iff \buPhi\hermconj \buPhi \bualpha + \buPhi\hermconj \buPhi^{*} \bualpha^{*} = L \bualpha \\
		& \iff 2 \buPhi\hermconj \Re[\buPhi \bualpha] = L \bualpha \\
		& \iff \frac{1}{2} \ve{A}^\top \ve{A} \ve{b} = L \ve{b}.
		}
		As such, the spectra of $\buPhi\hermconj \buPhi$ and $\ve{A}^\top \ve{A}$ coincide up to a factor 2 and we have $L = 2 L_{\ve{A}}$, which concludes the proof.
\end{proof}

	\subsubsection{Comments about implementation}

	\begin{algorithm}[!t]
		\label{alg:rLRTFS}
		Set $L = \| \bPhi \|_{2}^{2}$ (or a larger value) \\
		Compute the synthesis coefficients ${\ve{y}} = \buPhi\hermconj\ve{x}$ (with matrix form ${\ve{Y}}$) \\
		Set $\bualpha^{(0)} = {\ve{y}}$ \\
		Initialize $\W^{(0)}$ and $\H^{(0)}$ with the absolute values of the complex SVD of ${\ve{Y}}$ \\
		Set $i=0$
	
		\Repeat{convergence}{
		
		\%\% Update $\W$ and $\H$ with MM \\
		Compute spectrogram $\ve{S}^{(i)} = \left[|\underline{\alpha}_{fn}^{(i)}|^{2} \right]_{f=1,\ldots,F/2, n=1,\ldots,N}$ \\
		Initialize inner loop: $\W =\W^{(i)} $, $ \H=\H^{(i)}$ \\
						
		\Repeat{convergence}{

		$\W \leftarrow \W \circ \frac{[\ve{S}^{(i)} \circ (\W \H)^{\circ-2}]\H^\top}{[(\W \H)^{\circ-1}]\H^\top}$ \\
		$\ \H \leftarrow \H \circ \frac{ \W^\top [ \ve{S}^{(i)} \circ (\W \H)^{\circ-2}]}{\W^\top[(\W \H)^{\circ-1}]}$
			
		}
			
		Leave inner loop: $\W^{(i+1)}=\W$, $\H^{(i+1)}=\H$ \\
		Set $\bv^{(i+1)}= \text{vect}[\W^{(i+1)}\H^{(i+1)}]$
		
		\medskip
				
		\%\% Update $\balpha$ with accelerated ISA \\			
		Initialize inner loop: $\ve{a}^{(0)} = \ve{z}^{(0)}=\bualpha^{(i)}$ \\
		Set  $j=0$
		
		\Repeat{convergence }{
		\% Descend \\
		$\ve{z}^{(j+1/2)} = \ve{a}^{(j)} + \frac{1}{L} \buPhi\hermconj(\ve{x}- 2\Re[\buPhi\ve{a}^{(j)}])$ \\
		\% Shrink \\
		$\ve{z}^{(j+1)} = \frac{\ve{v}^{(i+1)}}{\ve{v}^{(i+1)}+ {\lambda}/{L}} \circ \ve{z}^{(j+1/2)}$ \\
		\% Accelerate \\
		$\ve{a}^{(j+1)} = \ve{z}^{(j+1)}+\frac{j}{j+5}(\ve{z}^{(j+1)}-\ve{z}^{(j)})$ \\
		{ $j \leftarrow j+1$}
		}
		Leave inner loop: $\bualpha^{(i+1)} = \ve{z}^{(j+1)}$ 
		
		}
		\caption{Alternate minimization for rLRTFS}
	\end{algorithm}

	Eq.~\eqref{eqn:alph1} and \eqref{eqn:alph2} can be accelerated like before and this results in the general procedure summarized in Algorithm~\ref{alg:rLRTFS}. As compared to Algorithm~\ref{alg:LRTFS}, $\balpha$ is essentially replaced by $\bualpha$, of size half, and the expression of $\ve{z}^{(i+1)}$ is changed with Eq.~\eqref{eqn:alph1}. Although the same notations are used for convenience, $\ve{Y}$, $\ve{S}^{(i)}$ and $\ve{W}$ become matrices with $F/2$ rows.

	Eq.~\eqref{eqn:alph1} can be read as follows. The operation $2 \Re[\buPhi \bualpha]$ consists of reconstructing an approximation $\hat{\ve{x}}$ of $\ve{x}$ based on the current synthesis coefficients $\bualpha$. The operation $\buPhi\hermconj \ve{e}$ then consists in computing the analysis coefficients (restricted to ``positive'' frequencies, i.e., $f=1,\ldots,F/2$) of the current residual $\ve{e} = \ve{x} - \hat{\ve{x}}$. When $\bPhi$ is a tight Gabor frame, these operations can be efficiently performed with dedicated time-frequency libraries, such as the MATLAB \& Python Large Time-Frequency Analysis Toolbox (LTFAT) \cite{Pruuvsa2014}.\footnote{The specific commands being of the like $\bualpha = \text{\tt dgtreal}(\ve{x},\text{`parameters'})$ and $\ve{x} = \text{\tt idgtreal}(\bualpha,\text{`parameters'})$, where $\texttt{dgt}$ stands for discrete Gabor transform.} When the Gabor frame is tight, i.e., $\bPhi \bPhi\hermconj \ve{x} = \ve{x}$, $\bPhi$ has a unit spectral norm and we may set $L=1$. A MATLAB implementation of Algorithm~\ref{alg:rLRTFS} is available online.\footnote{{\url{https://www.irit.fr/~Cedric.Fevotte/extras/tsp2018/}}. Future references to online material refer to this same url.}

	Finally, given estimates of $\bualpha$, $\W$ and $\H$, latent component coefficients $\hat{\bualpha}_{k}$ may be reconstructed like in Eq.~\eqref{eqn:mmse}, and then $\hat{\ve{c}}_{k}=  2 \Re[\buPhi \hat{\bualpha}_{k}]$. 

	
	\subsection{Related work} \label{sec:related}
	
	The closest to our work are probably the recent papers by Kameoka \cite{Kameoka2015,Kameoka2017} which addresses temporal models of the form $\ve{x} = \sum_{k} \ve{c}_{k}$, like Eq.~\eqref{eqn:comp}, where the spectrograms of the latent components are approximately rank-one. In essence (and slightly simplifying) these papers address optimization problems of the form
	\bal{
	\min_{\ve{c}_{k}, \W, \H }  \sum_{fkn} D([ |\bPhi\hermconj \ve{c}_{k}|^{2}]_{fn} |  w_{fk} h_{kn}) \ \ \text{s.t.} \ \ \ve{x} = \sum_{k} \ve{c}_{k}
	}
	where $|\bPhi\hermconj \ve{c}_{k}|^{2}$ is the spectrogram of $\ve{c}_{k}$, indexed by $f$ and $n$ according to the convention of Section~\ref{sec:lrtfs-mod}, and $D(\cdot | \cdot)$ is a divergence between nonnegative matrices (either the generalized Kullback-Leibler divergence or the quadratic cost in  \cite{Kameoka2015,Kameoka2017}). Though very elegant in our opinion, the approaches of \cite{Kameoka2015,Kameoka2017} are still analysis-based and do not yet provide a fully generative synthesis-based model like LRTFS. 
	

	\new{Another related trend of work are the approaches of \cite{Liutkus2011,Yoshii2013,Turner2014} which essentially model $x(t)$ as a sum of variance-structured Gaussian processes. Using our notations, the model in \cite{Turner2014} sets $N=T$ and assumes $x(t) = \sum_{f} \sqrt{[\W \H]_{ft}} \, \Re[s_{f}(t) e^{-j 2\pi ft/F}]$ where $s_{f}(t)$ is a complex-valued Gaussian autoregressive sequence. In \cite{Liutkus2011,Yoshii2013}, $x(t)$ is modeled as a short-time stationary process. The signal is segmented into overlapping temporal frames $\ve{x}_{n}$ of size $P$, like in the first stage of a STFT. Each temporal frame $\ve{x}_{n}$ is then assumed to follow a multivariate Gaussian distribution with covariance $\ve{R}_{n} = \sum_{k} h_{kn}  \bSigma_{k}$, where $\bSigma_{k}$ is a full covariance matrix of size $P \times P$ (real-valued and symmetric). Estimation then consists in estimating the set of parameters $\{ \bSigma_{k} \}_{k}$ and $\H$ from the entire set of temporal frames (the whole approach is coined PSDTF in \cite{Yoshii2013}). When it is further assumed that $\bSigma_{k}$ is the covariance matrix of a real-valued latent stationary process, $\bSigma_{k}$ becomes circulant and is diagonalized by the discrete Fourier transform (DFT). In that case, it can be shown that PSDTF specializes to IS-NMF \cite{Liutkus2011,Yoshii2013}. PSDTF remains close to the analysis view of IS-NMF: the raw data is segmented into overlapping frames and each frame is individually assigned a covariance model. In contrast, LRTFS provides a generative model of the entire signal $x(t)$, assumed to be a linear combination of elementary t-f bricks endowed with a low-rank variance structure. In LRTFS, we assumed the synthesis coefficients to be conditionally mutually independent, i.e., $p(\balpha | \W, \H) = \prod_{fn}(\alpha_{fn} | [\W \H]_{fn})$. We could very well consider more sophisticated models similar to PSDTF or extensions \cite{Liutkus2017,Yoshii2018} which assume some correlation across frequencies or frames. Papers \cite{Liutkus2011,Yoshii2013,Turner2014} describe NMF-related generative probabilistic models rooted in time series analysis while LRTFS offers a different perspective, rooted in the sparse approximation literature. In particular, LRTFS can be used with any time-frequency dictionary $\bPhi$, can easily accommodate multi-layer variants (see Section~\ref{sec:hybrid}) or be considered for inverse problems (see Section~\ref{sec:compress}).}



	

	\subsection{Example \label{sec:exlrtfs}}
	\label{sec:piano1}

	\begin{figure}[t]
		\centering
		(a) Input data \\
		\includegraphics[width=\linewidth,trim={1cm 1cm 0 0},clip]{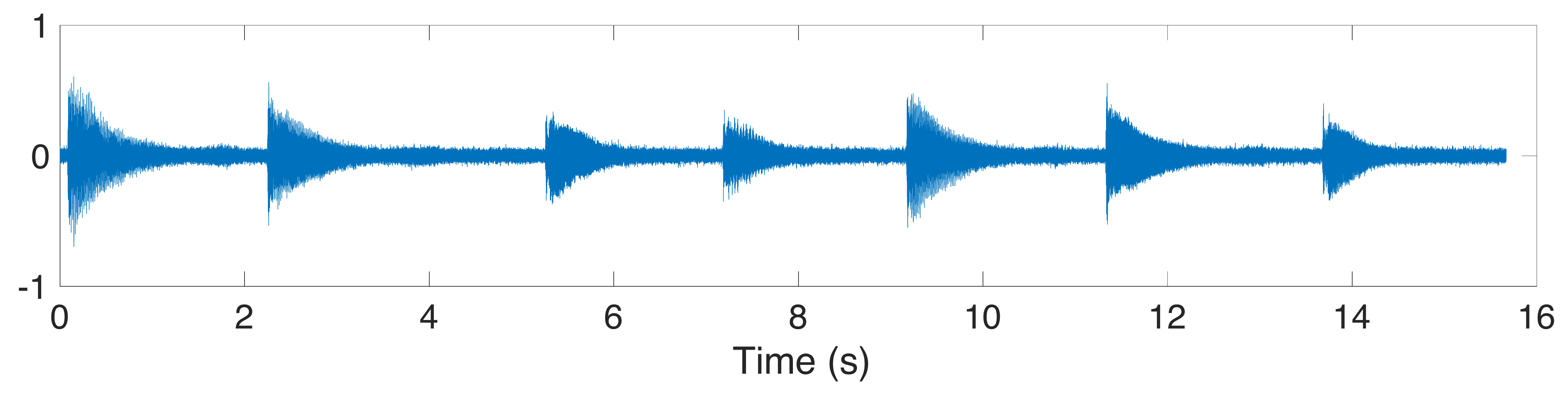} \\
		(b) IS-NMF decomposition
		\includegraphics[width=\linewidth,trim={4cm 2cm 4cm 1cm},clip]{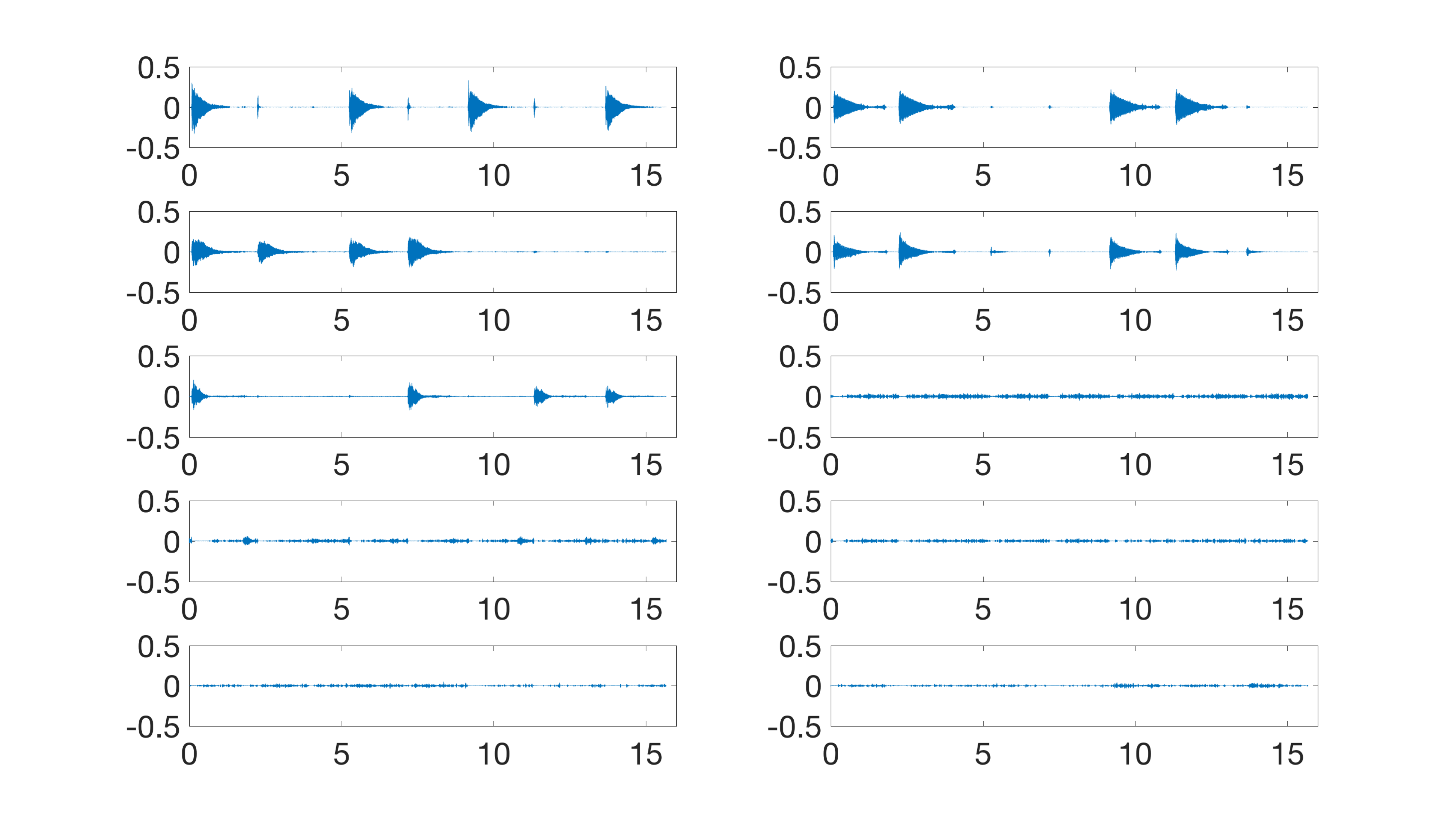}	
		(c) LRTFS decomposition \\
		\includegraphics[width=\linewidth,trim={4cm 2cm 4cm 1cm},clip]{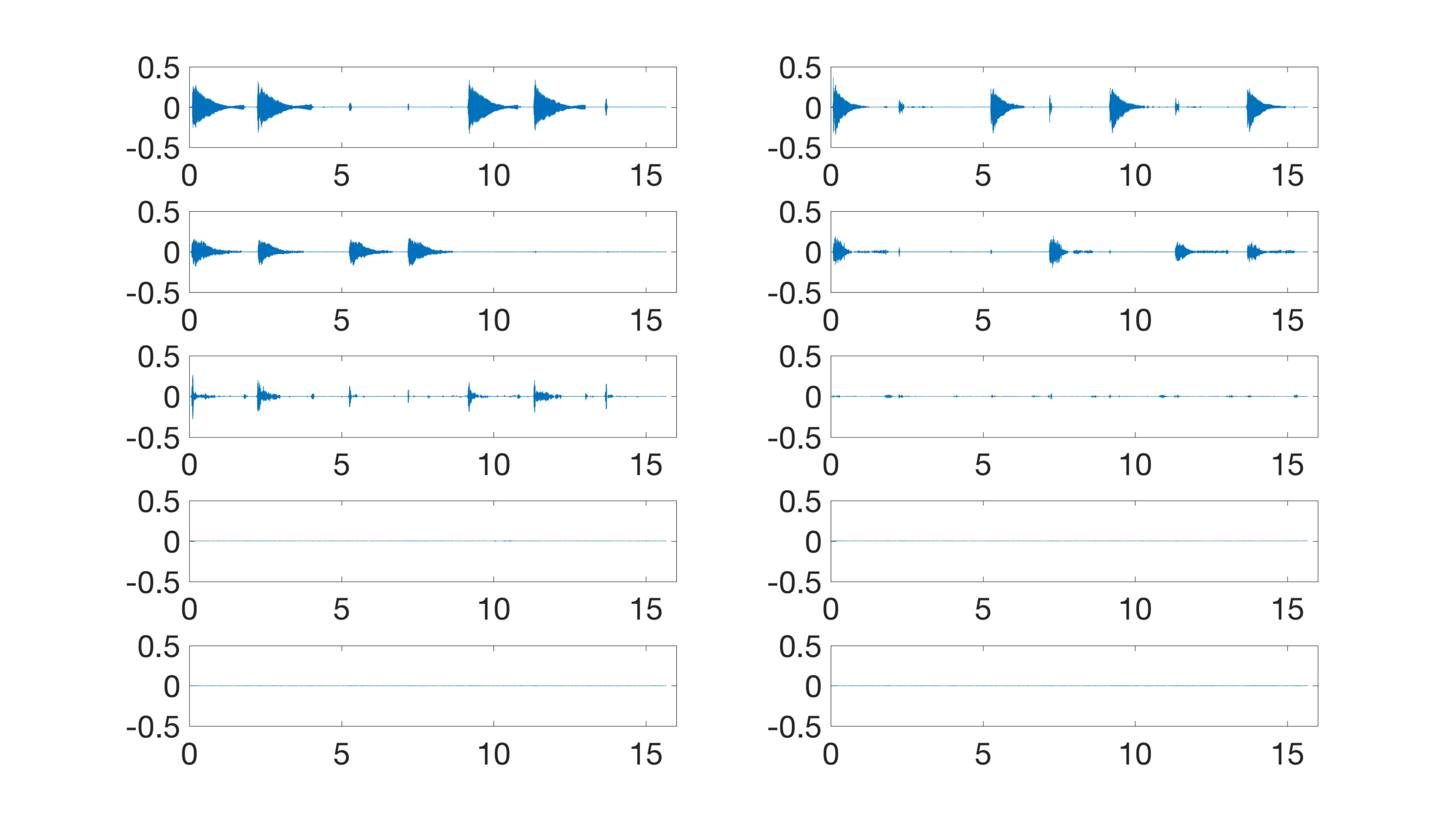} \\
		\caption{Decomposition of a piano sequence consisting of four notes. \new{The subplots in (b) display the latent components obtained by STFT inversion of Eq.~\eqref{eqn:wienmf}. The subplots in (c) display the latent components $\hat{\ve{c}}_{k}^{\text{MMSE}}$.} The components are displayed by decreasing energy (from left to right and top to bottom).}
		\label{fig:piano_1Layer}
	\end{figure}

	\begin{figure}[t]
		\centering
		(a) Analysis coefficients of input data \\
		\includegraphics[width=\linewidth,trim={1cm 1cm 0 0},clip]{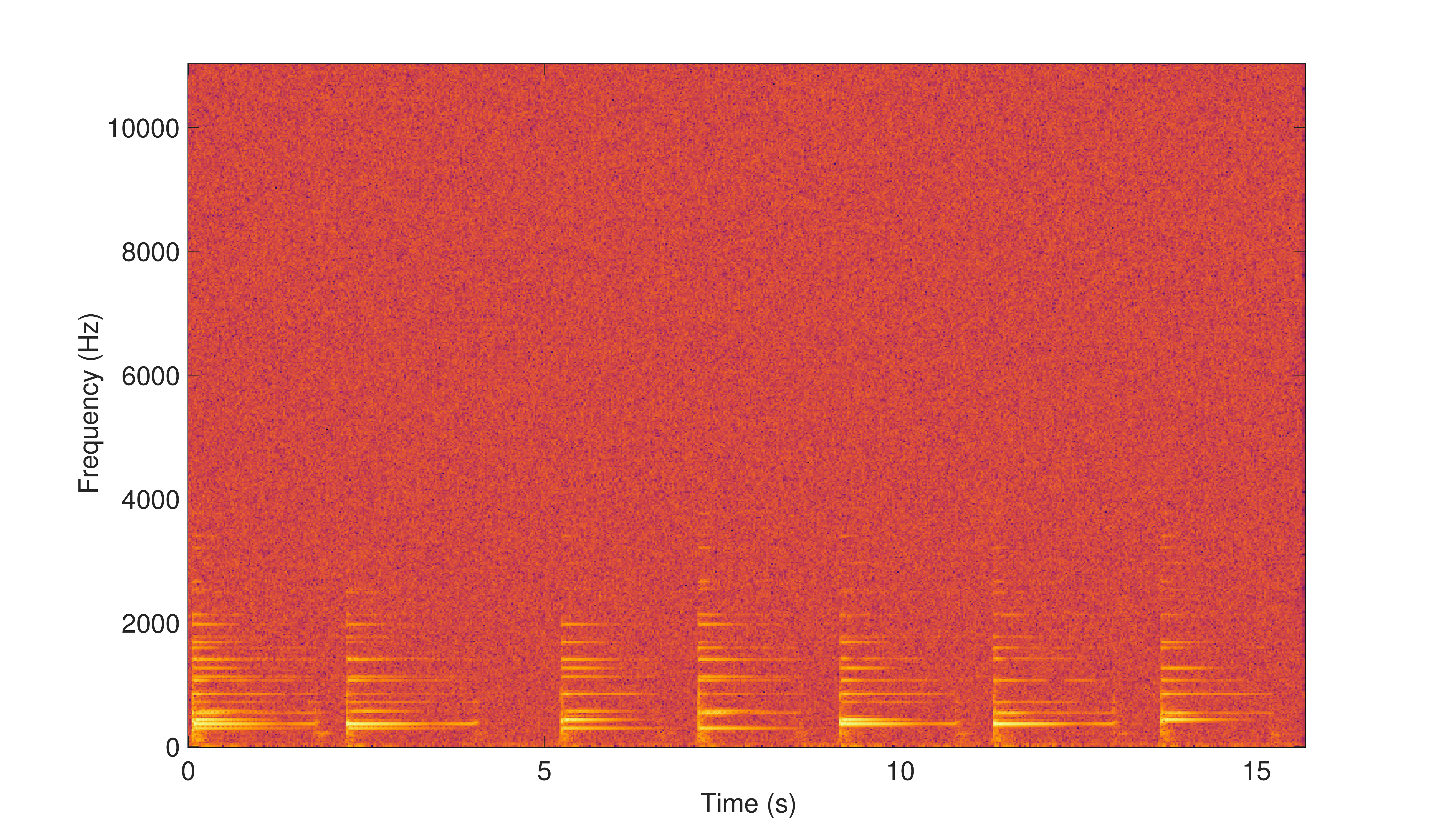} \\
		(b) Synthesis coefficients estimated by LRTFS
		\includegraphics[width=\linewidth,trim={1cm 1cm 0 0},clip]{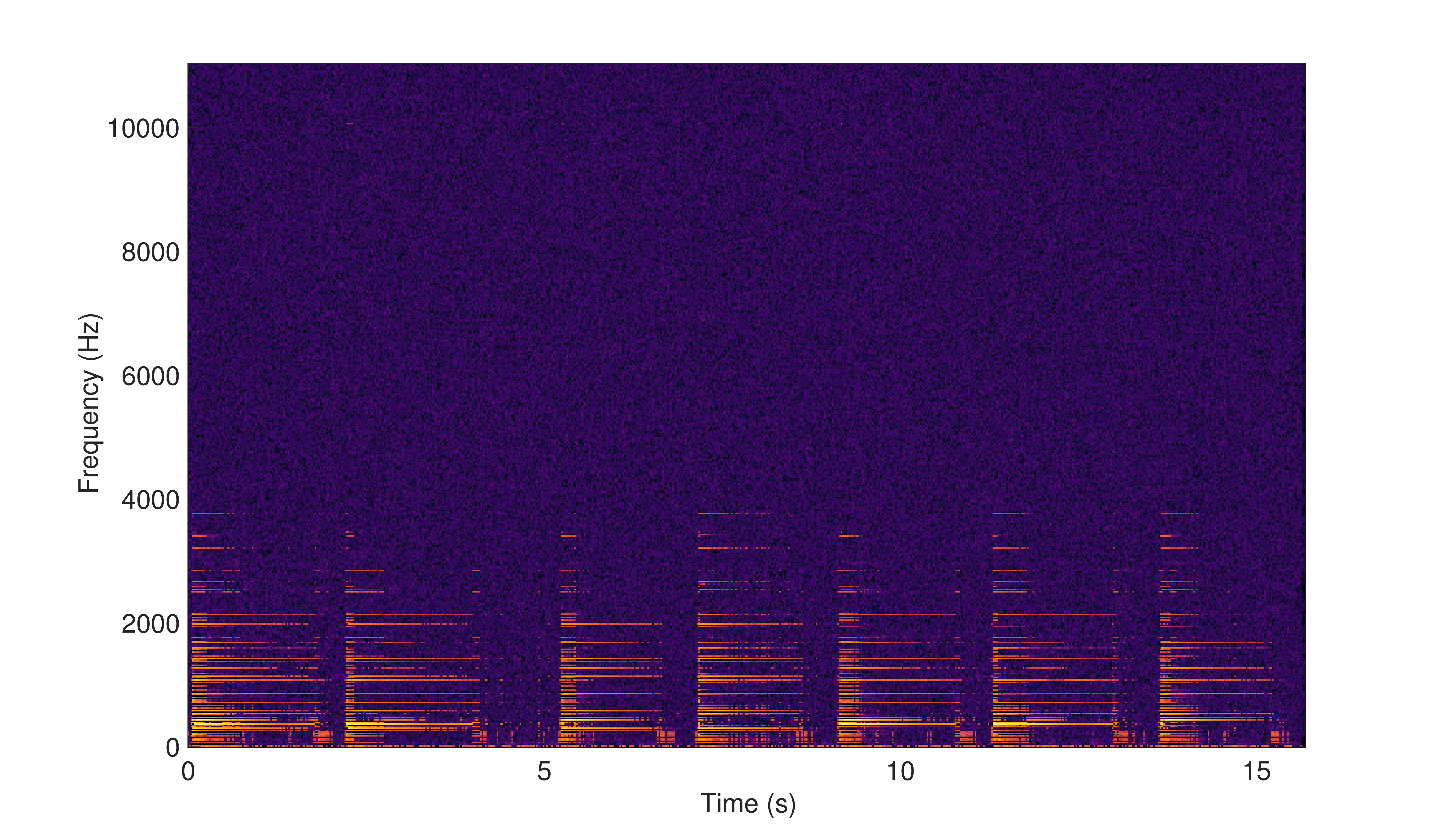}	
		\caption{\new{Time-frequency analyses of the piano sequence displayed in Fig.~\ref{fig:piano_1Layer}~(a). Subplot (a) displays the squared magnitude of the analysis coefficients given by the STFT, i.e., the power spectrogram $|y_{fn}|^{2}$. Subplot (b) displays the squared magnitude of synthesis coefficients $\alpha$ estimated by LRTFS, i.e., $|\alpha_{fn}|^{2}$. Common dB scale is used on the two subplots. 
		}}
		\label{fig:piano_1Layer_TF}
	\end{figure}

	We illustrate the performance of LRTFS compared to standard IS-NMF using the piano example used in~\cite{neco09}. The sequence has a simple structure: four notes are played together at once in the first measure and are then played by pairs in all possible combinations in the subsequent measures. The duration is $15.6$~s and the sampling rate $22050$~Hz. In noise-free conditions and with appropriate initialization, \new{standard IS-NMF is able to extract six identifiable latent components from this musical signal: the temporal expression of the four individual notes in a set of four components, the transient parts produced by the hammer hitting the strings {in a fifth component} and the sound produced by the sustain pedal when it is released {in a sixth component}~\cite{neco09}}. We here consider a noisy example using additive white Gaussian noise with $20$~dB input Signal to Noise Ratio (SNR). \new{The resulting signal is displayed in Fig.~\ref{fig:piano_1Layer}~(a).} A tight Gabor dictionary (with Hermitian symmetry) built on a Hann window of $1024$~samples ($46$ ms) with $50\%$ overlap is used for $\bPhi$. IS-NMF is applied to the analysis power spectrogram $|\buPhi \ve{x}|^{2}$ \new{displayed in Fig.~\ref{fig:piano_1Layer_TF}~(a).} The number of latent components is arbitrarily set to $K=10$ for both IS-NMF and rLTFS and the two methods are run from the same initialization (based on the SVD of $\ve{Y}$, see Algorithm~\ref{alg:rLRTFS}). Iteration of the main and inner loops is stopped when the relative error between two successive parameter iterates falls under $10^{-5}$. rLTFS is run with $30$ different values of $\lambda$ logarithmically equally spaced between $10^{-1}$ and $10^{-6}$. \new{The initialization described in Algorithm~\ref{alg:rLRTFS} was used for the first value $\lambda = 10^{-1}$ and warm restart was used for the subsequent experiments. The CPU time for the total 30 experiments is $\sim$5 min using a MATLAB implementation running on a Intel Core i5 processor.} We show results corresponding to the value of $\lambda$ that maximizes the output SNR given by
	\bal{
	10 \log \frac{\| \hat{\ve{x}} - \ve{x} \|_{2}^{2}}{\| \ve{x} \|_{2}^{2}}.
	}

	
	Decomposition results are reported in Fig.~\ref{fig:piano_1Layer}. LRTFS is able to recover the four notes in the first four components, \new{like standard IS-NMF in the noise free case,} while the fifth component recovers the transient components produced by the hammer and the sustain pedal, \new{corresponding to the last two components estimated by standard IS-NMF in the noise free case. As expected,} the remaining five components are inaudible because of the denoising performed by LRTFS. In this \new{noisy} setting, IS-NMF fails to recover this transient part and splits the first note into two components. The input noise is spread over the five remaining components. Audio files are available online. \new{Fig.~\ref{fig:piano_1Layer_TF}~(b) displays the squared magnitude of the synthesis coefficients estimated by LRTFS. Denoising is clearly illustrated by the recovery of high frequencies.}



	\section{Multi-layer LRTFS \label{sec:hybrid}}

	Besides the advantage of modeling the raw signal itself, and not its STFT, another major strength of LRTFS is that it offers the possibility of multi-layer modeling. This means we may envisage models of the form
	\bal{ \label{eqn:multil}
	\ve{x} &= \ve{x}_{\text{a}} + \ve{x}_{\text{b}} + \ve{e} = \bPhi_{\text{a}} \balpha_{\text{a}} + \bPhi_{\text{b}} \balpha_{\text{b}} + \ve{e}
	}
	where $\ve{x}_{\text{a}} = \bPhi_{\text{a}} \balpha_{\text{a}}$ and $\ve{x}_{\text{b}} = \bPhi_{\text{b}} \balpha_{\text{b}}$ are referred to as {\em layers}. This setting covers a variety of situations. $\bPhi_{\aa}$ and $\bPhi_{\bb}$ may be equal with $\balpha_{\aa}$ and $\balpha_{\bb}$ having a different structure. For example, $\balpha_{\aa}$ may follow a GCM like before and $\balpha_{\bb}$ may be given a sparsity-inducing prior. In such a case, multi-layer LRTFS offers a synthesis perspective to sparse + low-rank spectrogram decompositions, such as those presented in \cite{Huang2012,Chen2013,Sun2014} which propose variants of robust principal component analysis (RPCA) \cite{Candes2009} for spectral unmixing. Even more interestingly, the time-frequency dictionaries $\bPhi_{\text{a}}$ and $\bPhi_{\text{b}}$ may be chosen with different t-f resolutions. This yields so-called hybrid or morphological decompositions \cite{Daudet2002,starck2005morphological}, in which each layer may capture specific resolution-dependent structures. A typical audio example is transient + tonal decomposition: transient components are by nature adequately represented by a t-f dictionary with short time resolution while tonal components (such as the sustained parts of musical notes) are better represented by a t-f dictionary with larger time resolution (and as a consequence, finer frequency resolution). A variety of priors can be considered for $\balpha_{\aa}$ and $\balpha_{\bb}$, such as frequency grouping for the transient synthesis coefficients and temporal grouping for the tonal synthesis coefficients \cite{Kowalski2009a}.

	\subsection{Sparse and low-rank time-frequency synthesis}

	We consider a special case of multi-layer LRTFS that illustrates the potential of the synthesis approach. We present the methodology in the complex case for simplicity, but the results can readily be adapted to the real case following the procedure described in Section~\ref{sec:lrtfs}.

	\subsubsection{Model}
	Let $\bPhi_{\aa}$ and $\bPhi_{\bb}$ be time-frequency dictionaries consisting of atoms $\bphi_{fn}^{\aa}(t)$ and $\bphi_{fn}^{\bb}(t)$ with common dimension $T$ and t-f pavings of size $F_{\aa} \times N_{\aa}$ and $F_{\bb} \times N_{\bb}$, respectively. We consider the following model, for $t=1,\ldots,T$:
	\begin{align}
		x(t) &= \sum_{f=1}^{F_{\aa}} \sum_{n=1}^{N_{\aa}} \alpha_{fn}^{\aa} \phi_{fn}^{\aa}(t) + \sum_{f=1}^{F_{\bb}} \sum_{n=1}^{N_{\bb}} \alpha^{\bb}_{fn} \phi^{\bb}_{fn}(t) + e(t) \label{eqn:sigh} \\
		\alpha^{\aa}_{fn} &\sim N_{c}(0,[\W \H]_{fn}), f=1,\ldots,F_{\aa}, n=1,\ldots,N_{\aa} \label{eqn:alphaa} \\
		\alpha^{\bb}_{fn} &\sim N_{c}(0,v^{\bb}_{fn}), f=1,\ldots,F_{\bb}, n=1,\ldots,N_{\bb} \label{eqn:alphab} \\
		e(t) &\sim N_{c}(0,\lambda) \label{eqn:resh}
	\end{align}
	where $\{ \alpha^{\aa}_{fn} \}$ and $\{ \alpha^{\bb}_{fn} \}$ are the complex-valued synthesis coefficients, $\W$ and $\H$ are nonnegative matrices of sizes $F_{\aa} \times K$ and $K \times N_{\aa}$, respectively, $\{ v^{\bb}_{fn} \}$ are nonnegative variance parameters and $e(t)$ is an additive complex-valued residual term.	Eq.~\eqref{eqn:sigh} is nothing but the scalar form of Eq.~\eqref{eqn:multil}. Eq.~\eqref{eqn:alphaa} defines a GCM, while Eq.~\eqref{eqn:alphab} defines the sparse-inducing prior that is used in SBL. Like before, we denote by $\ve{v}^{\aa}$ and $\ve{v}^{\bb}$ the column vectors with coefficients $[\W \H]_{fn}$ and $v^{\bb}_{fn}$, respectively. Both $\ve{v}^{\aa}$ and $\ve{v}^{\bb}$ are parameters of a hierarchical variance model. Notice however how $\ve{v}^{\bb}$ is a {\em free} parameter, while $\ve{v}^{\aa}$ is structured through $\W$ and $\H$. Overall, Eq.~\eqref{eqn:sigh}-\eqref{eqn:resh}, define a multi-layer LRTFS model with latent low-rank t-f structure for layer $\ve{x}_{\aa}$ and latent sparse t-f structure for layer $\ve{x}_{\bb}$.

\new{Note that a wavelet dictionary could alternatively be used to represent the sparse layer: the estimation procedure presented next would apply in the exact same way. LRTFS however has to be supported by a regular t-f lattice and is not compatible with wavelets (in which time resolution decreases with frequency). LRTFS can however accommodate constant-$Q$ t-f representations (constant time resolution, logarithmic frequency resolution) provided it can be inverted (near-accurate synthesis operator are proposed in  \cite{Brown1991,Fitzgerald2006}).}

	\subsubsection{Estimation}
	
	The negative log-likelihood of the data and parameters in model~\eqref{eqn:sigh}-\eqref{eqn:resh} is given by
	\bal{  
	& - \log p(\ve{x},\balpha_\aa,\balpha_\bb  |\W,\H,\ve{v}^{\bb})  = \frac{1}{\lambda} \| \ve{x} - \bPhi_\aa \balpha_\aa - \bPhi_\bb \balpha_\bb \|_2^2 \nonumber\\
	& \quad + D_{\text{IS}}(|\balpha_\aa|^2|\ve{v}_{\aa}) +  \log(|\balpha_\aa|^2) \nonumber\\
	& \quad + D_{\text{IS}}(|\balpha_\bb|^2|\ve{v}_{\bb}) +  \log(|\balpha_\bb|^2) + cst
	}
	where $cst =  T \log \lambda + (T+ F_{\aa}N_{\aa} + F_{\bb}N_{\bb}) \log \pi$. Unfortunately, and similarly to the difficulty of estimating $\lambda$ raised in Section~\ref{sec:mjle}, MJLE fails to evenly distribute the signal variance onto the two layers, and one of the two layers takes it all in practice. Such a problem can be mitigated using MMLE instead of MJLE, but again, MMLE is too costly in our setting. To solve this issue we introduce an extra hyper-parameter $\mu$ that balances the contributions of each layer and propose to optimize the following objective
	\bal{  \label{eqn:cjl_hybrid0}
	C_\text{SLR}(\btheta) \defequal& \frac{1}{\lambda} \|\ve{x} - \bPhi_\aa \balpha_\aa - \bPhi_\bb \balpha_\bb \|_2^2 \\
	& + \mu \left[ D_{\text{IS}}(|\balpha_\aa|^2|\ve{v}_{\aa}) +  \log(|\balpha_\aa|^2) \right] \nonumber\\
	& + (1-\mu) \left[ D_{\text{IS}}(|\balpha_\bb|^2|\ve{v}_{\bb}) +  \log(|\balpha_\bb|^2) \right] + cst,
	}
	where  $0\le\mu\le1$, $\btheta = \{\balpha_{\aa},\balpha_{\bb}, \W,\H, \ve{v}_{\bb} \} $ is the set of  latent variables and parameters and SLR stands for ``sparse + low-rank''.

	We may again find a stationary point of $C_\text{SLR}(\btheta)$ by alternate minimization. The update of $\ve{v}_{\bb}$ is trivially given by $\ve{v}_{\bb} = |\balpha_{\bb}|^{2}$. The update of $\W$ and $\H$ amounts to finding an IS-NMF of the synthesis spectrogram $|\alpha^{\aa}_{fn}|^{2}$ like in Algorithm~\ref{alg:LRTFS}. The synthesis coefficients $\balpha_{\aa}$ and $\balpha_{\bb}$ may be updated jointly via ridge regression over the joint dictionary $[ \bPhi_{\aa}, \bPhi_{\bb}]$. This leads to the following updates
	\bal{
	\hat{\ve{e}}^{(j)} & = \ve{x} - \bPhi_{\aa} \balpha_{\aa}^{(j)} - \bPhi_{\bb} \balpha_{\bb}^{(j)} \label{eqn:hres} \\
	\balpha_{\aa}^{(j+1/2)} &= \balpha_{\aa}^{(j)} + \frac{1}{L} \bPhi_{\aa}\hermconj \hat{\ve{e}}^{(j)} \label{eqn:hdesc1}  \\
	\balpha_{\bb}^{(j+1/2)} &= \balpha_{\bb}^{(j)} + \frac{1}{L} \bPhi_{\bb}\hermconj \hat{\ve{e}}^{(j)} \label{eqn:hdesc2}  \\
	\balpha_{\aa}^{(j+1)} &= \frac{\ve{v}_{\aa}}{\ve{v}_{\aa}+ {\lambda}/{L}} \circ \balpha_{\aa}^{(j+1/2)} \label{eqn:hshrink1}\\
	\balpha_{\bb}^{(j+1)} &= \frac{\ve{v}_{\bb}}{\ve{v}_{\bb}+ {\lambda}/{L}} \circ \balpha_{\bb}^{(j+1/2)} \label{eqn:hshrink2}
	}
	where the inverse-step size should satisfy $L \ge \| [ \bPhi_{\aa}, \bPhi_{\bb}] \|_{2}^{2}$. A convenient choice is $L = \|\bPhi_{\aa}\|^2_{2}+\|\bPhi_{\bb}\|^2_{2}$. Eq.~\eqref{eqn:hres} computes the current residual, Eqs.~\eqref{eqn:hdesc1} and~\eqref{eqn:hdesc2} produce a step in the descent direction and Eqs.~\eqref{eqn:hshrink1} and \eqref{eqn:hshrink2} shrink the resulting iterates. 
	

	\subsection{Example} \label{sec:exhlrtfs}

	We use exactly the same data and setting as in Section~\ref{sec:exlrtfs} but we now add a sparse layer $\bPhi_{\bb} \balpha_{\bb} $ to the LRTFS layer. $\bPhi_{\bb}$ is set to be a tight Gabor dictionary built on a Hann window of 128 samples ($6$ ms) with $50\%$ overlap. $\bPhi_{\aa}$ is set as in Section~\ref{sec:exlrtfs}. The parameter $\mu$ was experimentally fixed to $\mu=0.05$, and $\lambda$ was again chosen among logarithmically spaced vales. Fig.~\ref{fig:multilrtfs} displays the 10 latent components characterizing the tonal layer and the transient layer. The components of the tonal layer are similar to those obtained from the single-layer LRTFS decomposition of Fig.~\ref{fig:piano_1Layer}. The fourth component captures part of the hammer attacks (especially from the first, most energetic note) with the shortest resolution components relegated to the transient layer $x^{\bb}(t)$ as expected. Audio files are available online.

	\begin{figure}[!t]
		\centering
		{\small (a) Latent components of the tonal layer $x^{\aa}(t)$} \\		
		\includegraphics[width=\linewidth,trim={4cm 2cm 4cm 1cm},clip]{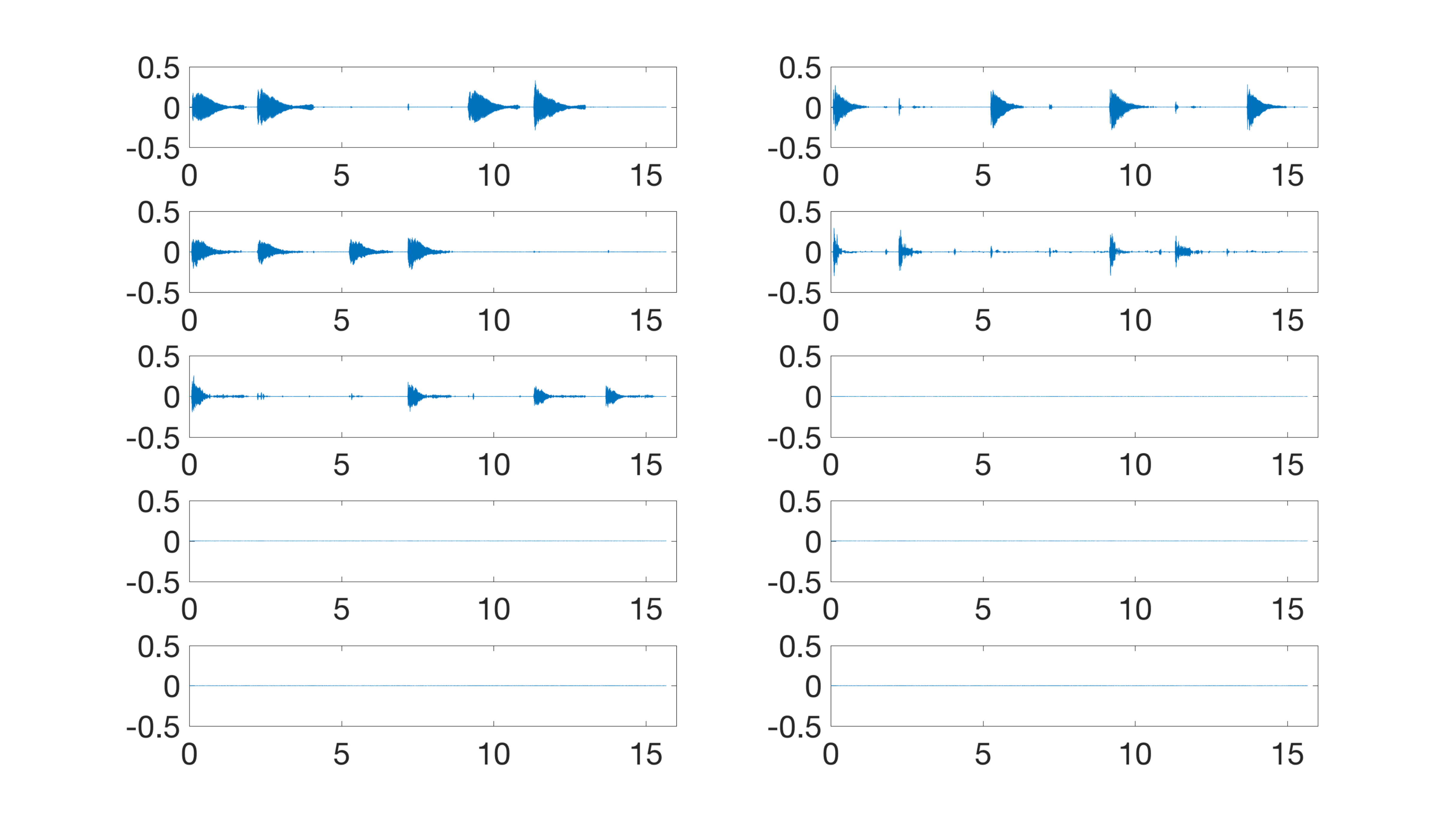} \\
		{\small (b) Transient layer $x^{\bb}(t)$} \\
		\includegraphics[width=\linewidth,trim={0cm 0.5cm 0 0},clip]{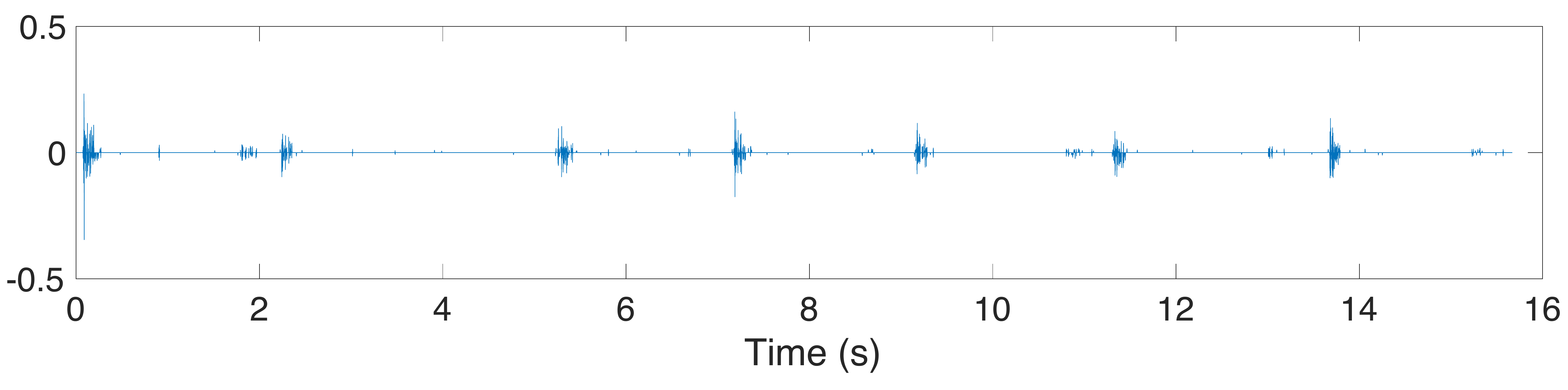}		
		\caption{Two-layer decomposition of the piano sequence displayed in Fig.~\ref{fig:piano_1Layer}.}
		\label{fig:multilrtfs}
	\end{figure}

	\section{Compressive LRTFS \label{sec:compress}}

	A striking advantage of LRTFS is that it may be used as a source model in inverse problems. For instance, LRTFS has been used in multichannel source separation in \cite{Leglaive2017}. We here consider compressive sensing (CS) in which a source signal $x(t)$ must be recovered from $S < <T$ random projections. Traditionally, CS exploits the sparsity of the synthesis coefficients of $x(t)$ onto a suitable dictionary. In this section we show that sparsity can be efficiently replaced with low-rankness, \new{for the class of signals considered}. 
	
	\subsection{Model}

	Let us denote by $\ve{x} \in \CC^T $ the vector source signal. The source is assumed to be sensed through the given linear operator $\ve{A} \in \CC^{S \times T}$ ($S < T$), with output $\ve{b} \in \CC^{S}$. We assume the following observation model:
	\bal{
	\ve{b} &  = \ve{A}\ve{x} + \ve{e} \\
	&   = \ve{A} \ve{\bPhi} \balpha+ \ve{e} 
	}
	where $\ve{\bPhi} \in \CC^{T \times M}$ is a given dictionary, $\balpha$ are the synthesis coefficients of $\ve{x}$, and $\ve{e}$ is a residual term that accounts for noise or model errors. Where traditional CS assumes some form of sparsity for $\balpha$, we assume the synthesis coefficients to have the LRTFS low-rank structure described by Eq.~\eqref{eqn:alpha}. Like in traditional CS settings, we assume $\ve{A}$ to be a random matrix. Finally, we assume $\ve{e}$ to follow a complex Gaussian distribution like in Eq.~\eqref{eqn:res}.

	\subsection{Estimation}

	MJLE amounts to minimizing the following objective function:
	\begin{align}
		& C_{\text{\text{CS}}}(\balpha,\W,\H) = - \log p(\ve{b}, \balpha | \W, \H, \lambda) \\
		&  = \frac{1}{\lambda} \|\ve{b} - \bA\bPhi\balpha\|_2^2 \nonumber + D_{\text{IS}}(|\balpha|^2|\W\H) + \log(|\balpha| ^2) + cst\label{eqn:cjl_inv}
	\end{align}
	where $cst = T \log \lambda + (T+ M) \log \pi$. The problem of optimizing $C_{\text{\text{CS}}}(\balpha,\W,\H)$ is equivalent to the one of optimizing $C_{\text{\text{JL}}}(\balpha,\W,\H)$ given by Eq.~\eqref{eqn:cjl}. In the complex case, the methodology developed in Section~\ref{sec:algo} can be readily applied by replacing $\bPhi$ with $\ve{M} = \ve{A} \bPhi$. The spectral norm of $\ve{M}$ may be difficult to derive or compute and we may set $L = \| \ve{A} \|_{2}^{2} \| \ve{\bPhi} \|_{2}^{2} $ thanks to the inequality
	\bal{
	\| \ve{A} \bPhi \|_{2}^{2} \le \| \ve{A} \|_{2}^{2} \| \ve{\bPhi} \|_{2}^{2}.
	}
	In the real case, i.e, when $\ve{x} \in \RR^T$, the methodology developed in Section~\ref{sec:lrtfs-real} may again be applied by assuming $\ve{A} \in \RR^{S \times T}$ and replacing $\buPhi$ with $\ve{M} = \ve{A} \buPhi$. Posterior to estimation, an estimate of the original source is given by $\hat{\ve{x}} = \bPhi \hat{\balpha}$.

Note that we have addressed compressive sampling of real or complex-valued signals by exploiting a latent NMF-type \mbox{t-f} structure, which is different from compressive sampling of non-negative signals, a topic addressed for example in \cite{OGrady2008}.

	\subsection{Example}
	
	We evaluate the recovery accuracy of the piano sequence used in Sections~\ref{sec:exlrtfs} and~\ref{sec:exhlrtfs} using a number of measurements $S$ varying increasingly from $T/100$ to $T/10$. For this experiment, the length of the sequence remains $15.6$~s but the sampling rate has been fixed at 11025 Hz because of the memory and computational complexities. The Gabor parameters have been adjusted accordingly with a Hann window of length $512$~samples ($46$~ms) with $50\%$ overlap.

	We compare CS recovery methods based on LRTFS, SBL and $\ell_{1}$ regularization, using a common alternating minimization setting (only the shrinkage or thresholding operators are changed). Note that we here consider type-I SBL (equivalent to MJLE) and not type-II (which again does not scale with the dimensions of our problem). The algorithms are initialized with $ \balpha =\mathbf{0}_{M \times 1}$. The first IS-NMF step of the LRTFS estimation was initialized with the absolute value of the complex-SVD as explained in Section~\ref{sec:algo}. LRTFS was applied with $K=10$ and the hyper-parameter $\lambda$ was incrementally decreased from $10^3$ to $10^{-2}$. \new{In addition, we provide the performance results of two oracles. In the first oracle, the vector of variances in Eq.~\eqref{eqn:prioralphR} is set to the power spectrogram of the ground-truth uncompressed signal $x(t)$ (updates of $\W$, $\H$ are thus removed). In the second oracle, we set the matrices $\W$, $\H$ to their estimates returned by IS-NMF applied to the power spectrogram of $x(t)$. These two oracles allow one to evaluate the remaining gap between adaptive and optimal CS recovery.}


Estimation accuracy was measured by means of output SNR. The results are displayed in Fig.~\ref{fig:CS} and show that LRTFS-based recovery improves accuracy by several dBs as compared to sparsity-based methods. \new{This means that for this type of signals which are endowed with a strong low-rank t-f structure, there is a significant gain in exploiting low-rankness instead of mere unstructured sparsity for CS}. Such a recovery approach is made possible thanks to the generative design of LRTFS. Fig.~\ref{fig:piano_CS_comp} displays the estimated components $\hat{\ve{c}}_{k}$ returned by LRFTS. It is interesting to note that only $4$ components are meaningful. The first two notes are well recovered, like in the experiment of Section~\ref{sec:piano1}, see Fig.~\ref{fig:piano_1Layer}~(b), while the two other notes are mixed in the third component. The fourth component still captures some transient information. We also run experiments for various values of the rank $K$ in the case $S=5\,T/100$. The recovery results appeared very robust to this parameter. For $K \in \{ 5,8,10,15,20,30 \}$ the largest difference in the output SNRs was less than $0.5$~dB. \new{This robustness is partially explained by the fact that LRTFS tends to shrink irrelevant components, as explained by Eq.~\eqref{eqn:shrink} and illustrated by Fig.~\ref{fig:piano_1Layer}. We believe that the deterministic initialization provided by SVD is another explanation.}


	Finally, we run the same CS experiment using the first $12$~s of the song {\em Mamavatu} from Susheela Raman. The excerpt contains acoustic guitar and drums. Output SNRs are displayed on Fig.~\ref{fig:mamavatu_CS}. Again, LRTFS recovery outperforms $\ell_{1}$ regularization and SBL by several dBs which confirms the potential of the proposed model for audio inverse problems, \new{where t-f low-rankness is a valid assumption.}


	\begin{figure}[t]
		\centering
		\includegraphics[width=\linewidth,trim={3cm 0cm 4cm 1cm},clip]{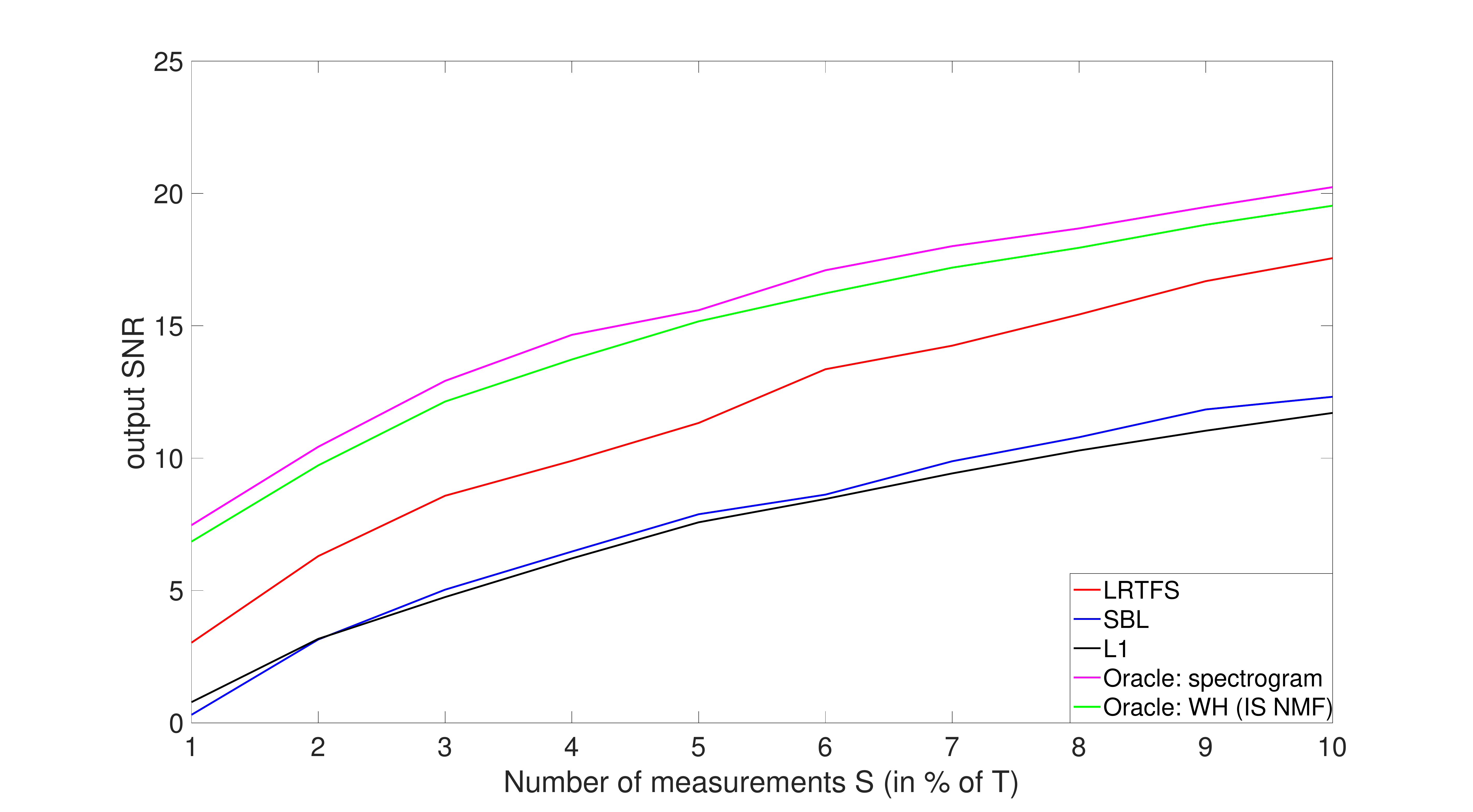}
		\caption{Recovery of a compressively sensed piano sequence using LRTFS, SBL and $\ell_{1}$ regularization, compared with two oracles.}
		\label{fig:CS}
	\end{figure}

	\begin{figure}[t]
		\centering
		\includegraphics[width=\linewidth,trim={4cm 2cm 4cm 1cm},clip]{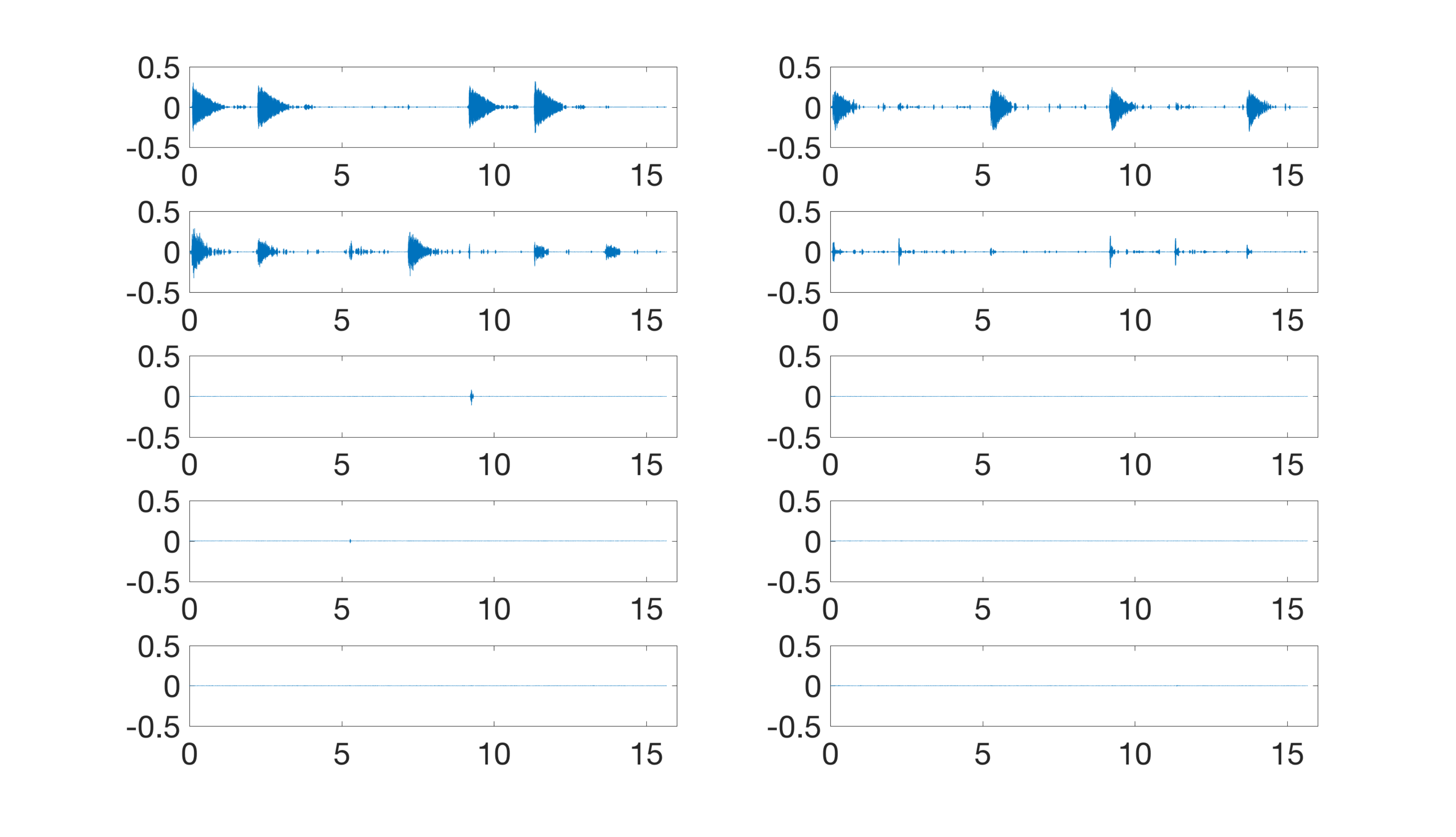} \\
		\caption{Latent components of the compressively sensed piano sequence recovered by LRTFS. The temporal components are displayed by decreasing energy (from left to right and top to bottom).}
		\label{fig:piano_CS_comp}
	\end{figure}

	\begin{figure}[t]
		\centering
		\includegraphics[width=\linewidth,trim={3cm 0cm 4cm 1cm},clip]{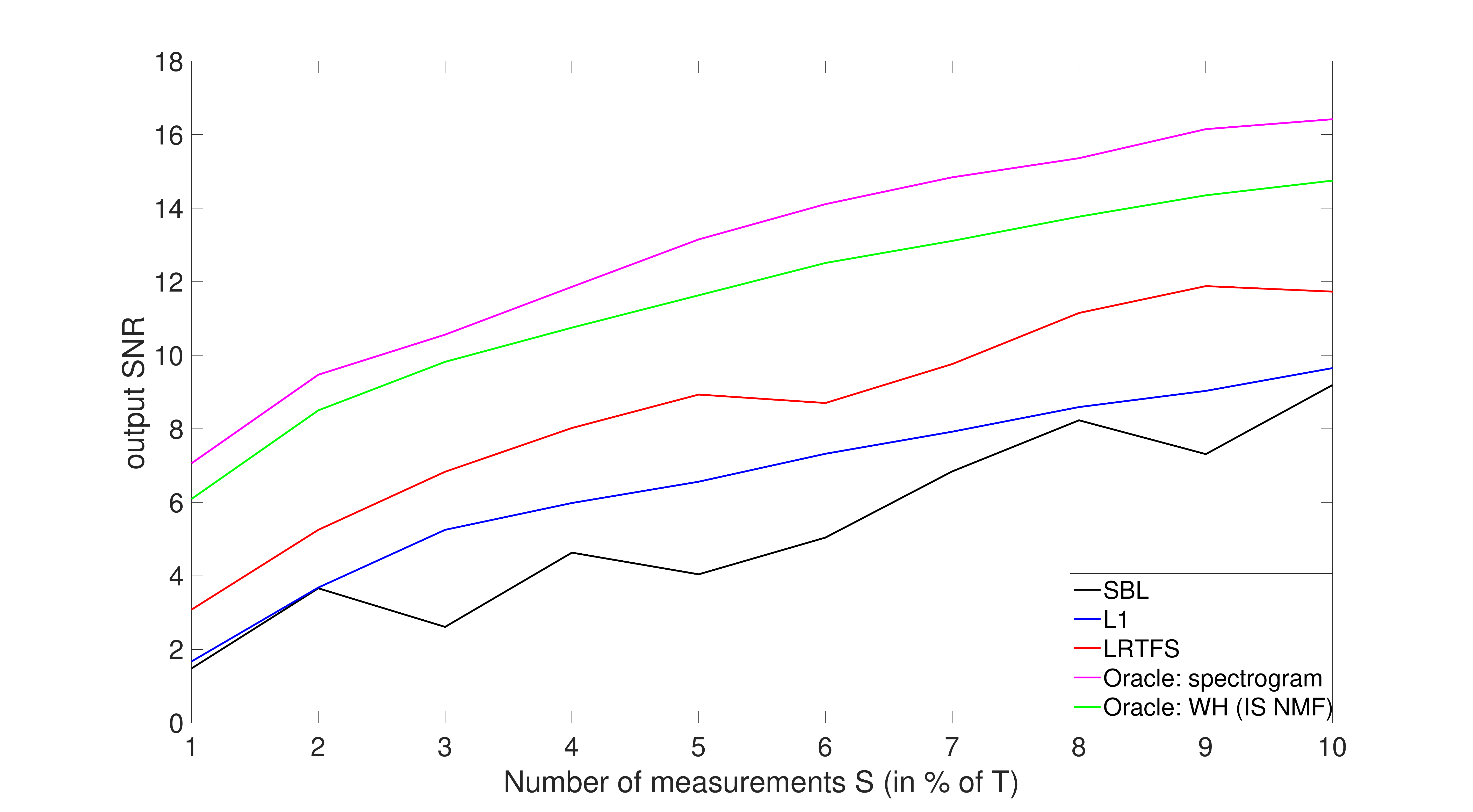}
		\caption{Recovery of the compressively sensed Mamavatu sequence using LRTFS, SBL and $\ell_{1}$ regularization, compared with two oracles.}
		\label{fig:mamavatu_CS}
	\end{figure}

	\section{Conclusion}	\label{sec:conclusion}

	We have presented a new modeling paradigm that bridges t-f synthesis modeling and traditional analysis-based approaches. The proposed generative model allows in turn to design more sophisticated multi-layer representations that can efficiently capture diverse forms of structure. Additionally, the generative modeling allows to exploit NMF-like structure for compressive sensing which, to the best of our knowledge, is entirely new. Maximum joint likelihood estimation in the proposed models can be efficiently addressed using state-of-the-art iterative shrinkage and NMF algorithms. They can be efficiently implemented thanks to dedicated time-frequency analysis/synthesis packages. In this paper, we also addressed the modeling and decomposition of real signals in a rigorous way, which was missing from our preliminary contributions and appeared more tricky than initially expected.

	The MLJE objective function~\eqref{eqn:cjl} induced by the proposed generative modeling suggests more general problems of the form
	\bal{
	C(\balpha,\W,\H,\lambda) = \frac{1}{\lambda} \|\ve{x} - \bPhi\balpha\|_2^2 + D(|\balpha|^p|\ve{v}) 
	}
	where $\ve{v}=\text{vec}[\W \H]$, $D(\cdot|\cdot)$ is an arbitrary divergence between nonnegative numbers and $p$ is an arbitrary exponent. $D=D_{\text{IS}}$ and $p=2$ follow naturally from the GCM assumptions but other choices could be more suitable for other families of signals or images. Such problems do not seem to have been addressed yet in the literature and offer stimulating optimization problems. The exact reconstruction case $\lambda = 0$ is also very interesting in itself. Another challenging line of research is the design of workable large-scale optimization algorithms for type-II maximum marginal likelihood estimation. As known from \cite{Wipf2004}, such an estimator would be robust to the joint estimation of $\lambda$ and $\ve{v}$, something in which MJLE fails in practice. \new{The low-rank structure used in Eq.~\eqref{eqn:alpha} to model the variance of the synthesis coefficients could be changed for more complex structures, such as neural architectures. This has been considered to model STFT synthesis coefficients in audio applications \cite{Bando2018}, using for example variational auto-encoders for training \cite{Kingma2004}. The framework presented in this paper can readily accommodate such variants, in particular in the multi-layer setting in which a layer can be assigned a pre-trained variance (for a specific class of signals such as speech) and another layer can be endowed with a free adaptive low-rank variance.}

\section*{Acknowledgment}
We are grateful to the reviewers for their valuable comments. C\'edric F\'evotte acknowledges support from the European Research Council (ERC) under the European Union's Horizon 2020 research and innovation programme under grant agreement No 681839 (project FACTORY).
	

\begin{thebibliography}{10}
\providecommand{\url}[1]{#1}
\csname url@samestyle\endcsname
\providecommand{\newblock}{\relax}
\providecommand{\bibinfo}[2]{#2}
\providecommand{\BIBentrySTDinterwordspacing}{\spaceskip=0pt\relax}
\providecommand{\BIBentryALTinterwordstretchfactor}{4}
\providecommand{\BIBentryALTinterwordspacing}{\spaceskip=\fontdimen2\font plus
\BIBentryALTinterwordstretchfactor\fontdimen3\font minus
  \fontdimen4\font\relax}
\providecommand{\BIBforeignlanguage}[2]{{%
\expandafter\ifx\csname l@#1\endcsname\relax
\typeout{** WARNING: IEEEtran.bst: No hyphenation pattern has been}%
\typeout{** loaded for the language `#1'. Using the pattern for}%
\typeout{** the default language instead.}%
\else
\language=\csname l@#1\endcsname
\fi
#2}}
\providecommand{\BIBdecl}{\relax}
\BIBdecl

\bibitem{Smaragdis2014}
\BIBentryALTinterwordspacing
P.~Smaragdis, C.~F{\'e}votte, G.~Mysore, N.~Mohammadiha, and M.~Hoffman,
  ``Static and dynamic source separation using nonnegative factorizations: {A}
  unified view,'' \emph{IEEE Signal Processing Magazine}, vol.~31, no.~3, pp.
  66--75, May 2014.
\BIBentrySTDinterwordspacing

\bibitem{vir07}
T.~Virtanen, ``Monaural sound source separation by non-negative matrix
  factorization with temporal continuity and sparseness criteria,'' \emph{IEEE
  Transactions on Audio, Speech and Language Processing}, vol.~15, no.~3, pp.
  1066--1074, Mar. 2007.

\bibitem{ieee_asl10}
\BIBentryALTinterwordspacing
A.~Ozerov and C.~F\'evotte, ``Multichannel nonnegative matrix factorization in
  convolutive mixtures for audio source separation,'' \emph{IEEE Transactions
  on Audio, Speech and Language Processing}, vol.~18, no.~3, pp. 550--563, Mar.
  2010. 
\BIBentrySTDinterwordspacing

\bibitem{elad2007analysis}
M.~Elad, P.~Milanfar, and R.~Rubinstein, ``Analysis versus synthesis in signal
  priors,'' \emph{Inverse problems}, vol.~23, no.~3, p. 947, 2007.

\bibitem{balazs2013adapted}
P.~Balazs, M.~Doerfler, M.~Kowalski, and B.~Torr{\'e}sani, ``Adapted and
  adaptive linear time-frequency representations: a synthesis point of view,''
  \emph{IEEE Signal Processing Magazine}, vol.~30, no.~6, pp. 20--31, 2013.

\bibitem{sprechmann2013supervised}
P.~Sprechmann, R.~Litman, T.~B. Yakar, A.~M. Bronstein, and G.~Sapiro,
  ``Supervised sparse analysis and synthesis operators,'' in \emph{Advances in
  Neural Information Processing Systems}, 2013, pp. 908--916.

\bibitem{nam2013cosparse}
S.~Nam, M.~E. Davies, M.~Elad, and R.~Gribonval, ``The cosparse analysis model
  and algorithms,'' \emph{Applied and Computational Harmonic Analysis},
  vol.~34, no.~1, pp. 30--56, 2013.

\bibitem{neco09}
\BIBentryALTinterwordspacing
C.~F\'evotte, N.~Bertin, and J.-L. Durrieu, ``Nonnegative matrix factorization
  with the {I}takura-{S}aito divergence. {W}ith application to music
  analysis,'' \emph{Neural Computation}, vol.~21, no.~3, pp. 793--830, Mar.
  2009. 
\BIBentrySTDinterwordspacing

\bibitem{Kameoka2015}
H.~Kameoka, ``Multi-resolution signal decomposition with time-domain
  spectrogram factorization,'' in \emph{Proc.~IEEE International Conference on
  Acoustics, Speech and Signal Processing (ICASSP)}, 2015.

\bibitem{Kameoka2017}
------, ``Complex {NMF} with the generalized {K}ullback-{L}eibler divergence,''
  in \emph{Proc.~IEEE International Conference on Acoustics, Speech and Signal
  Processing (ICASSP)}, 2017.

\bibitem{Liutkus2011}
A.~Liutkus, R.~Badeau, and G.~Richard, ``Gaussian processes for underdetermined
  source separation,'' \emph{IEEE Transactions on Signal Processing}, vol.~59,
  no.~7, pp. 3155--3167, July 2011.

\bibitem{Yoshii2013}
K.~Yoshii, R.~Tomioka, D.~Mochihashi, and M.~Goto, ``Infinite positive
  semidefinite tensor factorization for source separation of mixture signals,''
  in \emph{Proc.~International Conference on Machine Learning (ICML)}, 2013.

\bibitem{Turner2014}
R.~E. Turner and M.~Sahani, ``Time-frequency analysis as probabilistic
  inference,'' \emph{IEEE Transactions on Signal Processing}, vol.~62, no.~23,
  pp. 6171--6183, Dec 2014.

\bibitem{Liutkus2017}
A.~Liutkus and K.~Yoshii, ``A diagonal plus low-rank covariance model for
  computationally efficient source separation,'' in \emph{Proc.~IEEE
  International Workshop on Machine Learning for Signal Processing (MLSP)},
  2017.

\bibitem{Yoshii2018}
K.~Yoshii, ``Correlated tensor factorization for audio source separation,'' in
  \emph{Proc.~IEEE International Conference on Acoustics, Speech, and Signal
  Processing (ICASSP)}, 2018.

\bibitem{nips14}
\BIBentryALTinterwordspacing
C.~F\'evotte and M.~Kowalski, ``Low-rank time-frequency synthesis,'' in
  \emph{Advances in Neural Information Processing Systems (NIPS)}, Dec. 2014.
\BIBentrySTDinterwordspacing

\bibitem{Fevotte2015a}
\BIBentryALTinterwordspacing
------, ``Hybrid sparse and low-rank time-frequency signal decomposition,'' in
  \emph{Proc.~European Signal Processing Conference (EUSIPCO)}, Nice, France,
  Sep. 2015.
\BIBentrySTDinterwordspacing

\bibitem{lee99}
D.~D. Lee and H.~S. Seung, ``Learning the parts of objects with nonnegative
  matrix factorization,'' \emph{Nature}, vol. 401, pp. 788--791, 1999.

\bibitem{tip01}
M.~E. Tipping, ``Sparse {B}ayesian learning and the relevance vector machine,''
  \emph{Journal of Machine Learning Research}, vol.~1, pp. 211--244, 2001.

\bibitem{Wipf2004}
D.~P. Wipf and B.~D. Rao, ``Sparse bayesian learning for basis selection,''
  \emph{IEEE Transactions on Signal Processing}, vol.~52, no.~8, pp.
  2153--2164, Aug. 2004.

\bibitem{Figueiredo2003}
M.~Figueiredo and R.~Nowak, ``An {EM} algorithm for wavelet-based image
  restoration,'' \emph{IEEE Transactions on Image Processing}, vol.~12, no.~8,
  pp. 906--916, 2003.

\bibitem{Beck2009}
A.~Beck and M.~Teboulle, ``A fast iterative shrinkage-thresholding algorithm
  for linear inverse problems,'' \emph{SIAM Journal on Imaging Sciences},
  vol.~2, no.~1, pp. 183--202, 2009.

\bibitem{Chaari2011}
L.~Cha{\^a}ri, J.-C. Pesquet, A.~Benazza-Benyahia, and P.~Ciuciu, ``A
  wavelet-based regularized reconstruction algorithm for {SENSE} parallel {MRI}
  with applications to neuroimaging,'' \emph{Medical Image Analysis}, vol.~15,
  no.~2, pp. 185--201, 2011.

\bibitem{Florescu2014}
A.~Florescu, E.~Chouzenoux, J.-C. Pesquet, P.~Ciuciu, and S.~Ciochina, ``A
  majorize-minimize memory gradient method for complex-valued inverse
  problems,'' \emph{Signal Processing}, vol. 103, pp. 285--295, 2014.

\bibitem{chambolle2015convergence}
A.~Chambolle and C.~Dossal, ``On the convergence of the iterates of the “fast
  iterative shrinkage/thresholding algorithm”,'' \emph{Journal of
  Optimization Theory and Applications}, vol. 166, no.~3, pp. 968--982, 2015.

\bibitem{Ben-Tal2004}
A.~N.~A. Ben-Tal and A.~Nemirovski, ``Optimization {III}: {C}onvex analysis,
  {N}onlinear programming theory, {S}tandard nonlinear programming
  algorithms,'' Lecture Notes, 2004.

\bibitem{betanmf}
\BIBentryALTinterwordspacing
C.~F\'{e}votte and J.~Idier, ``Algorithms for nonnegative matrix factorization
  with the beta-divergence,'' \emph{Neural Computation}, vol.~23, no.~9, pp.
  2421--2456, Sep. 2011.
\BIBentrySTDinterwordspacing

\bibitem{becker2015complex}
J.~Becker, M.~Menzel, and C.~Rohlfing, ``Complex {SVD} initialization for {NMF}
  source separation on audio spectrograms,'' \emph{Proc.~Deutsche Jahrestagung
  f{\" u}r Akustik (DAGA)}, 2015.

\bibitem{hale2008fixed}
E.~T. Hale, W.~Yin, and Y.~Zhang, ``Fixed-point continuation for
  $\ell_1$-minimization: Methodology and convergence,'' \emph{SIAM Journal on
  Optimization}, vol.~19, no.~3, pp. 1107--1130, 2008.

\bibitem{Pruuvsa2014}
Z.~Pru{\v{s}}a, P.~L. Sondergaard, N.~Holighaus, C.~Wiesmeyr, and P.~Balazs,
  ``The large time-frequency analysis toolbox 2.0,'' in \emph{Sound, Music, and
  Motion, Lecture Notes in Computer Science}.\hskip 1em plus 0.5em minus
  0.4em\relax Springer, 2014, pp. 419--442.

\bibitem{Huang2012}
P.-S. Huang, S.~D. Chen, P.~Smaragdis, and M.~Hasegawa-Johnson, ``Singing-voice
  separation from monaural recordings using robust principal component
  analysis,'' in \emph{Proc.~IEEE International Conference on Acoustics, Speech
  and Signal Processing (ICASSP)}, 2012.

\bibitem{Chen2013}
Z.~Chen and D.~P.~W. Ellis, ``Speech enhancement by sparse, low-rank, and
  dictionary spectrogram decomposition,'' in \emph{Proc.~IEEE Workshop on
  Applications of Signal Processing to Audio and Acoustics (WASPAA)}, 2013.

\bibitem{Sun2014}
\BIBentryALTinterwordspacing
C.~Sun, Q.~Zhu, and M.~Wan, ``A novel speech enhancement method based on
  constrained low-rank and sparse matrix decomposition,'' \emph{Speech
  Communication}, vol.~60, pp. 44--55, 2014. 
\BIBentrySTDinterwordspacing

\bibitem{Candes2009}
E.~J. Cand\`es, X.~Li, Y.~Ma, and J.~Wright, ``Robust principal component
  analysis?'' \emph{Journal of ACM}, vol.~58, no.~1, pp. 1--37, 2009.

\bibitem{Daudet2002}
L.~Daudet and B.~Torr\'esani, ``Hybrid representations for audiophonic signal
  encoding,'' \emph{Signal Processing}, vol.~82, no.~11, pp. 1595 -- 1617,
  2002.

\bibitem{starck2005morphological}
J.-L. Starck, Y.~Moudden, J.~Bobin, M.~Elad, and D.~Donoho, ``Morphological
  component analysis,'' in \emph{Optics \& Photonics}, 2005.

\bibitem{Kowalski2009a}
M.~Kowalski, ``Sparse regression using mixed norms,'' \emph{Applied and
  Computational Harmonic Analysis}, vol.~27, no.~3, pp. 303--324, 2009.

\bibitem{Brown1991}
J.~C. Brown, ``Calculation of a constant {Q} spectral transform,'' \emph{The
  Journal of the Acoustical Society of America}, vol.~89, no.~1, pp. 425--434,
  1991.

\bibitem{Fitzgerald2006}
D.~Fitzgerald, M.~Cranitch, and M.~T. Cychowski, ``Towards an inverse constant
  {Q} transform,'' in \emph{Proc. Audio Engineering Society Convention}, 2006.

\bibitem{Leglaive2017}
S.~Leglaive, R.~Badeau, and G.~Richard, ``Separating time-frequency sources
  from time-domain convolutive mixtures using non-negative matrix
  factorization,'' in \emph{Proc.~IEEE Workshop on Applications of Signal
  Processing to Audio and Acoustics (WASPAA)}, 2017.

\bibitem{OGrady2008}
P.~D. O'Grady and S.~T. Rickard, ``Compressive sampling of non-negative
  signals,'' in \emph{Proc.~ IEEE Workshop on Machine Learning for Signal
  Processing (MLSP)}, 2008.

\bibitem{Bando2018}
Y.~Bando, M.~Mimura, K.~Itoyama, K.~Yoshii, and T.~Kawahara, ``Statistical
  speech enhancement based on probabilistic integration of variational
  autoencoder and non-negative matrix factorization,'' in \emph{Proc.~IEEE
  International Conference on Acoustics, Speech, and Signal Processing
  (ICASSP)}, 2018.

\bibitem{Kingma2004}
D.~Kingma and M.~Welling, ``Auto-encoding variational {B}ayes,'' in
  \emph{Proc.~International Conference on Learning Representations}, 2004.

\end{thebibliography}
\newpage

	\begin{IEEEbiography}{C\'edric F\'evotte} is a CNRS senior researcher at Institut de Recherche en Informatique de Toulouse (IRIT). Previously, he has been a CNRS researcher at Laboratoire Lagrange (Nice, 2013-2016) \& T\'el\'ecom ParisTech (2007-2013), a research engineer at Mist-Technologies (the startup that became Audionamix, 2006-2007) and a postdoc at University of Cambridge (2003-2006). He holds MEng and PhD degrees in EECS from \'Ecole Centrale de Nantes. His research interests concern statistical signal processing and machine learning, for inverse problems and source separation. He was a member of the IEEE Machine~Learning for Signal Processing technical committee (2012-2018) and is a member of SPARS steering committee since 2018. He has been a member of the editorial board of the IEEE Transactions on~Signal Processing since 2014, first as an associate editor and then as a senior area editor (from 2018). In 2014, he was the co-recipient of an IEEE Signal Processing Society Best Paper Award for his work on audio source separation using multichannel nonnegative matrix factorization. He is the principal investigator of the European Research Council project FACTORY (New paradigms for latent factor estimation, 2016-2021).
	\end{IEEEbiography}

	\begin{IEEEbiography}{Matthieu Kowalski}
		received the engineering degree in computer science from the Universit{\'e} de Technologie de Compi{\`e}gne in 2005, and the master degree in Mathematics Vision and Learning from the Ecole Normale Sup\'erieur, Cachan, the same year. He received the PhD degree in applied mathematics from the University of Provence in 2008. His thesis was axed on sparse time-frequency decompositions. He is now an associate professor at the University of Paris-Sud, in the L2S Lab, and his research focuses on Inverse Problems and structured sparse approximations. He is an elected member of the SPARS Steering committee since 2013. 
	\end{IEEEbiography}

\end{document}